\newtheorem{theorem}{Theorem}
\newtheorem{lemma}[theorem]{Lemma}
\newtheorem{assumption}[theorem]{Assumption}
\theoremstyle{definition}
\newtheorem{notation}[theorem]{Notation}
\newtheorem*{remark}{Remark}
\newtheorem{implementationremark}[theorem]{Implementation Remark}
\renewcommand{\todo}[2][]{\tikzexternaldisable\@todo[#1]{#2}\tikzexternalenable}
\title{Interpolating Parametrized Quantum Circuits using Blackbox Queries}
\author{\hspace*{-2em}Lars Simon \\ \hspace*{-1em}Bundesdruckerei GmbH \\ {\tt\hspace*{-1em} lars.simon@bdr.de} \and \hspace*{5.em}Holger Eble \\ \hspace*{5.2em}Bundesdruckerei GmbH \\ {\tt \hspace*{4.9em}holger.eble@bdr.de}  \and \hspace*{-4.25em}Hagen-Henrik Kowalski \\ \hspace*{-4.1em}Bundesdruckerei GmbH \\ {\tt \hspace*{-2em}hagen-henrik.kowalski@bdr.de}  \and \hspace*{1.6em}Manuel Radons \\\hspace*{1.5em} Bundesdruckerei GmbH \\ {\tt\hspace*{1.5em} manuel.radons@bdr.de}}
\date{May 2024}
\begin{document}
	
	\maketitle
	
	\begin{abstract}
		This article focuses on developing classical surrogates for parametrized quantum circuits using interpolation via (trigonometric) polynomials. 
We develop two algorithms for the construction of such surrogates and prove performance guarantees. 
The constructions are based on circuit evaluations which are blackbox in the sense that no structural specifics of the circuits are exploited. 
While acknowledging the limitations of the blackbox approach compared to whitebox evaluations, which exploit specific circuit properties, we demonstrate scenarios in which the blackbox approach might prove beneficial.
Sample applications include but are not restricted to the approximation of VQEs and the alleviaton of the barren plateau problem.
	\end{abstract}

	\section{Introduction}\label{sec:introduction}
The goal of this article is to derive classical surrogates of parametrized quantum circuits, \cite{Schreiber_2023}, \cite{jerbi2023shadows}, \cite{landman2022classically}. The notion of {\emph{classical surrogate}} considered in this article is defined more loosely than in the aforementioned references: We want to reliably approximate the expected value of some observable with respect to some state computed by some (parametrized) quantum circuit in dependence of the parameters of the circuit.  As a result, we can consider classical surrogates both in the context of quantum circuit simulation and in the context of quantum machine learning.

In this work we introduce two algorithms for the construction of such surrogates via interpolation by (trigonometric) polynomials. The observable and the parametrized quantum circuit are provided as input to the algorithms, which in turn output classical surrogates in the sense described above. The algorithms require access to an oracle for the evaluation of the expected value at certain points in parameter space. We refer to the oracle queries as blackbox queries/evaluations, since the former exploit no structural specifics of the parametrized quantum circuits. By \emph{samples} we mean data points obtained from blackbox queries, where a data point is a pair consisting of a point in parameter space and expected value at this point. In practice, querying the oracle may involve classical simulation of quantum circuits or execution of quantum circuits on actual quantum hardware. The former can be useful in situations, where there exist certain constellations of points at which the circuit is efficiently classically simulable, while it is not efficiently simulable at arbitrary points in parameter space.

	There exist both advantages and disadvantages of blackbox evaluations.
	The most obvious disadvantage is that whitebox evaluations allow to exploit properties of the individual circuits for improved performance compared to the blackbox approach -- albeit at the cost of tying this superior performance to (potentially very) specific conditions.
	
	If the circuit consists of parametrized Pauli rotations and Clifford gates, approximation of the expected value is possible using methods like Quadratic Clifford Expansion \cite{Mitarai2022}, Sparse Pauli Dynamics \cite{begusic2023fast}, Clifford Perturbation Theory \cite{begusic2023simulating}, Fourier expansion in variational quantum algorithms \cite{Nemkov_2023}, and the LOWESA algorithm \cite{rudolph2023classical}, \cite{fontana2023classical}. Some of these methods have recently been used to successfully simulate IBM's Eagle kicked Ising experiment \cite{Kim2023}.
	
	While all of these approaches are, as we noted above, tied to specific conditions, Clifford gates, together with single-qubit Pauli rotation gates, form a universal gate set. Hence, by the Solovay-Kitaev theorem \cite{Kitaev1997} \cite{KitaevSolovayProof}, any circuit can be efficiently approximated by circuits consisting of such gates. So, by replacing fixed rotation angles by variables, the above-mentioned methods can be applied to simulate such circuits.
	However, there exist straightforward constructions of circuits that make simulation after this fashion infeasible.
		Below we will present a variational circuit similar to the quantum kernel used in  \cite{Havl_ek_2019} whose construction follows such a blueprint, demonstrating that examples which elude the Clifford gate driven approach may, in fact, be practically relevant and are not restricted to some academic fringe. While, of course, a real quantum device is necessary to efficiently obtain samples for such circuits, our algorithms can still be used in this setting.
  
  Our algorithms only involve blackbox queries at certain grid points in the parameter space. Close to these grid points we have strong performance guarantees. This is relevant, since there exist scenarios, in which only the behavior close to an initial point is important. 
We highlight two such cases:

\begin{itemize}
    \item Perhaps the most relevant example for our purposes are VQE, where the ansatz is constructed in such a way that some grid point corresponds to the Hartree-Fock ground state \cite{Mitarai2022}.
    \item Alleviating the barren plateau problem by obtaining a \emph{good enough} approximation of the exact solution via the exact solution of the corresponding problem in the approximate model, which is subsequently used as the initial point for the optimization procedure on a physical quantum device \cite{Mitarai2022}, \cite{Holmes_2022}. 
\end{itemize}
 
For the sake of simplicity we do not take shot noise or the lack of fault-tolerance into account in our theoretical considerations. However, in a follow-up work \cite{simon2024denoising}, we demonstrate experimentally that one of our algorithms can be used to alleviate the adverse effects of both shot noise and (simulated) quantum hardware noise on gradient descent in variational quantum algorithms.

 The worst-case scaling for the accurate simulation of a parametrized circuit through blackbox queries is exponential in the number of parameters, see Lemma 4 in \cite{jerbi2023shadows}. However, in order to ensure that the approximation error vanishes to any (fixed, but arbitrarily high) order around a chosen grid point in parameter space, the number of samples required by our algorithms is polynomial in the number of parameters, see Theorems \ref{thm:performance_guarantees_taylor}, \ref{thm:performance_guarantees_fourier}. In particular, the approximation error is guaranteed to be small in the above-mentioned scenarios, in which only the behavior close to an initial point is important.
	We present two algorithms: Algorithm \ref{algorithm:taylor} computes the Taylor series to an order provided as input using parameter shift rules from quantum machine learning \cite{Mitarai_2018}, \cite{Schuld_2019}, \cite{Mari_2021}, \cite{Wierichs2022generalparameter}. Algorithm \ref{algorithm:fourier} uses a variant of the multivariate Dirichlet kernel to compute an approximation using trigonometric polynomials.\\
	
    \noindent\textbf{Acknowledgement} This article was written as part of the Qu-Gov project, which was commissioned by the German Federal Ministry of Finance. The authors want to extend their gratitude to Manfred Paeschke and Oliver Muth for their continuous encouragement and support.
	
	\subsection{Content and structure}
	
	In Section \ref{sec:algorithms} we go over some preliminaries, introduce Algorithms \ref{algorithm:taylor} and \ref{algorithm:fourier}, state the corresponding performance guarantees in Theorems \ref{thm:performance_guarantees_taylor} and \ref{thm:performance_guarantees_fourier}, and compare the algorithms with existing methods. In Section \ref{sec:experiments} we describe our experiments involving Algorithms \ref{algorithm:taylor} and \ref{algorithm:fourier} and Section \ref{sec:conclusion} contains our closing remarks. Finally, Appendix \ref{appendix:theoretical_background} contains a more rigorous treatment of the theory underlying our algorithms.

	\section{Algorithms}
	\label{sec:algorithms}

In this section we develop the above-mentioned algorithms for classically approximating functions of the form
$$ f\colon \mathbb R^m \to \mathbb R\,,\, \theta\ \mapsto \ \langle\psi (\theta)|\mathcal{M}|\psi (\theta)\rangle\, ,$$
where $|\psi (\theta)\rangle$ is a quantum state for all $\theta$ and $\mathcal{M}$ is an observable. The observable $\mathcal{M}$ is an input provided in Pauli decomposition and the map $\theta\mapsto |\psi (\theta)\rangle$ is an input provided as a parametrized quantum circuit (without measurement). For the technical details of the problem statement, see Section \ref{subsec:tech-prelim}.

\begin{remark}
Theoretically, our algorithms work for arbitrary observables.
However, observables in Pauli decomposition can have up to $4^n$ non-vanishing terms, where $n$ is the number of qubits, which would result in a combinatorial explosion of the computational effort. 
Nevertheless, in practically relevant cases, the number of non-vanishing terms is usually non-prohibitive.
\end{remark}

 \subsection{Technical preliminaries}\label{subsec:tech-prelim}
 Let $n\in \mathbb Z_{>0}$ and $m\in\mathbb{Z}_{\geq 0}$ be integers that denote the number of qubits and parameters, respectively.
For $\theta\in\mathbb{R}^m$, consider the unitary

	\begin{align*}
		U(\theta)
		=C_{m+1}R_m (\theta_m) C_m\cdots R_2 (\theta_2) C_2 R_1 (\theta_1) C_1 
		\in\mathbb{C}^{2^n\times 2^n},
	\end{align*}
	where $C_1 ,\dots , C_{m+1}$ are unitaries given by $n$-qubit quantum circuits and, for all $j\in\{1,\dots ,m\}$, the unitary $R_j (\theta_j)$ is a rotation of the form 
	\begin{align*}
		R_j (\theta_j) = 
		\exp \left(-i\frac{\theta_j}{2}G_j\right)
		\in \mathbb{C}^{2^n\times 2^n}
	\end{align*}
	for some Hermitian $G_j\in \mathbb{C}^{2^n\times 2^n}$ whose set of eigenvalues is $\{-1,1\}$. Moreover, we consider an observable given by a Hermitian matrix $\mathcal{M}\in\mathbb{C}^{2^n\times 2^n}$. Letting $|\psi (\theta)\rangle :=U(\theta)|0\rangle^{\otimes n}$, our aim is to classically approximate the expected value
	\begin{align*}
		f(\theta):=\langle\psi (\theta)|\mathcal{M}|\psi (\theta)\rangle
		,
	\end{align*}
	which defines a function $f\colon\mathbb{R}^m\to\mathbb{R}$. Denoting the Pauli matrices as $I,X,Y,Z$ and recalling that $\{I,X,Y,Z\}^{\otimes n}$ is a basis for the $\mathbb{R}$-vector space of Hermitian matrices in $\mathbb{C}^{2^n\times 2^n}$, we can write
	\begin{align*}
		\mathcal{M}=\sum_{(P_1 ,\dots ,P_n)\in\{I,X,Y,Z\}^n} a_{(P_1 ,\dots ,P_n)} P_1\otimes\cdots\otimes P_n ,
	\end{align*}
	where $a_{(P_1 ,\dots ,P_n)}\in\mathbb{R}$ for all $(P_1 ,\dots ,P_n)\in\{I,X,Y,Z\}^n$. Since this sum consists of $4^n$ summands, we need to assume that the number of non-zero coefficients is small; a common assumption in the field of quantum machine learning is that only $\mathcal{O}(\operatorname{poly}(n))$ of these coefficients are non-zero. In particular, we have
	\begin{align*}
		f(\theta) = 
		\sum_{(P_1 ,\dots ,P_n)\in\{I,X,Y,Z\}^n} a_{(P_1 ,\dots ,P_n)} \langle\psi (\theta)|(P_1\otimes\cdots\otimes P_n )|\psi (\theta)\rangle
	\end{align*}
	for all $\theta\in\mathbb{R}^m$.
	
	\begin{assumption}
		\label{assumption:fast_sampling_of_expected_value_of_pauli_strings}
		We assume that the quantity $\langle\psi (\theta)|(P_1\otimes\cdots\otimes P_n )|\psi (\theta)\rangle$ can be efficiently sampled (either through classical simulation or through evaluation on a quantum device), whenever the components of $\theta$ are integer multiples of $\frac{\pi}{2}$, i.e., whenever $\theta\in\frac{\pi}{2}\mathbb{Z}^m$, and $P_1 ,\dots ,P_n\in\{I,X,Y,Z\}$.
	\end{assumption}
	
	It follows that we can efficiently sample $f$ at points in the grid $\frac{\pi}{2}\mathbb{Z}^m$, assuming that the number of non-zero coefficients in the above decomposition of $\mathcal{M}$ is small. We are particularly interested in the behaviour of our algorithms for when $\theta$ is close to $0\in\mathbb{R}^m$.
	
	Before describing our algorithms, we briefly point out that we can write
	\begin{align*}
		f(\theta) =\sum_{\omega\in\{-1,0,1\}^m} c_\omega e^{i\omega^T \theta}
		\text{ for all }\theta\in\mathbb{R}^m,
	\end{align*}
	where $c_\omega\in\mathbb{C}$ and $c_\omega = \overline{c_{-\omega}}$ for all $\omega\in\{-1,0,1\}^m$, see \cite{Schuld_2021}. Here, $\omega^T\in\mathbb{R}^{1\times m}$ denotes the matrix transpose of $\omega\in\mathbb{R}^m\cong\mathbb{R}^{m\times 1}$, i.e., $\omega^T \theta = \omega_1\theta_1 +\dots + \omega_m\theta_m$.
	
	\begin{remark}
		We imposed quite restrictive assumptions on the set of eigenvalues of $G_1 ,\dots , G_m$, as well as on the set of points in $\mathbb{R}^m$ where $f$ can be sampled efficiently. This was done for ease of presentation -- in fact, adjusting our algorithms to more general eigenvalue spectra and more general subsets of $\mathbb{R}^m$ on which $f$ can be sampled efficiently is straightforward.
	\end{remark}
	
	\subsection{Algorithm based on Taylor polynomials}
	
	In \cite{Mitarai2022}, the first- and second-order partial derivatives were computed by passing certain Pauli operators through Clifford circuits and exploiting the classical simulatability of such circuits  \cite{Gottesman1998} \cite{VanDenNes2010}. In this section we compute arbitrary-order partial derivatives in the more general setting described above. We will do so using explicit formulas for the partial derivatives which are known as parameter shift rules in the context of quantum machine learning, see \cite{Mitarai_2018}, \cite{Schuld_2019}, \cite{Mari_2021}, \cite{Wierichs2022generalparameter}.
	
	\begin{lemma}
		\label{lemma:partials_for_our_function_f}
		For all multiindices $\alpha\in (\mathbb{Z}_{\geq 0})^m$ we have:
		\begin{align*}
			D^{\alpha}f (0) = \frac{1}{2^{|\alpha |}} \sum_{\mathfrak{i}\in\{-1,1\}^{|\alpha |}} \mathfrak{i}^{(1,\dots ,1)} f(p_{\alpha ,\mathfrak{i}}) ,
		\end{align*}
		where we index the entries of $\mathfrak{i}\in\{-1,1\}^{|\alpha |}$ as
		\begin{align*}
			\mathfrak{i} = \left(\mathfrak{i}_{1,1},\dots ,\mathfrak{i}_{1,\alpha_1},\dots ,\mathfrak{i}_{m,1},\dots ,\mathfrak{i}_{m,\alpha_m}\right),
		\end{align*}
		and
		\begin{align*}
			p_{\alpha ,\mathfrak{i}} = \frac{\pi}{2}\cdot\left((\mathfrak{i}_{1,1}+\dots +\mathfrak{i}_{1,\alpha_1})\bmod 4,\dots ,(\mathfrak{i}_{m,1}+\dots +\mathfrak{i}_{m,\alpha_m})\bmod 4\right).
		\end{align*}
		 In particular, all terms appearing in the above sum can be evaluated efficiently, assuming that the number of non-zero coefficients in the decomposition of $\mathcal{M}$ is small.
	\end{lemma}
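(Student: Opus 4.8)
The plan is to establish the formula as an \emph{exact} identity (there is no approximation here), exploiting the fact, recalled above, that $f$ is a trigonometric polynomial whose frequency support is contained in $\{-1,0,1\}^m$. The route is to reduce the statement to a single-variable parameter shift identity, promote it to an identity of operators valid at \emph{every} point of $\mathbb{R}^m$, and then iterate.

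First I would introduce, for each coordinate $j$, the central-difference operator defined by
\[
(S_j g)(\theta) = \tfrac{1}{2}\bigl(g(\theta + \tfrac{\pi}{2}e_j) - g(\theta - \tfrac{\pi}{2}e_j)\bigr),
\]
where $e_j$ denotes the $j$-th standard basis vector, and prove the operator identity $\partial_{\theta_j} f = S_j f$. Since both sides are linear in $f$ and $f$ is a finite combination of the exponentials $e^{i\omega^T\theta}$ with $\omega\in\{-1,0,1\}^m$, it suffices to check the identity on a single mode. On $e^{i\omega^T\theta}$ the left-hand side produces the scalar $i\omega_j$, while the right-hand side produces $i\sin(\omega_j \pi/2)$; these agree precisely for $\omega_j\in\{-1,0,1\}$, which is exactly the available frequency range. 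This is the familiar parameter-shift rule, and the hypothesis that each $G_j$ has spectrum $\{-1,1\}$ (so $G_j^2 = I$) is what guarantees the frequencies lie in $\{-1,0,1\}^m$ in the first place.

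Next I would iterate. Because $S_j$ sends each mode $e^{i\omega^T\theta}$ to a scalar multiple of itself, it preserves the class of trigonometric polynomials supported on $\{-1,0,1\}^m$; hence the identity of the previous step may be applied repeatedly, yielding $\partial_{\theta_j}^{\alpha_j} f = S_j^{\alpha_j} f$. Moreover the operators $S_1,\dots,S_m$ pairwise commute, as each acts by shifts in a distinct coordinate, so $D^\alpha f = S_1^{\alpha_1}\cdots S_m^{\alpha_m} f$. Writing $S_j = \tfrac{1}{2}\sum_{\mathfrak{i}\in\{-1,1\}} \mathfrak{i}\, T_{\mathfrak{i}\pi/2\, e_j}$, where $T_v$ denotes translation by $v$, I would expand the composite operator: running over all $j$ and all repetitions $k=1,\dots,\alpha_j$ produces a sum over $\mathfrak{i}\in\{-1,1\}^{|\alpha|}$ with prefactor $2^{-|\alpha|}$, overall sign $\prod_{j,k}\mathfrak{i}_{j,k} = \mathfrak{i}^{(1,\dots,1)}$, and net translation in coordinate $j$ equal to $\tfrac{\pi}{2}\sum_k\mathfrak{i}_{j,k}$. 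Evaluating at $\theta=0$ gives $f$ at the point whose $j$-th coordinate is $\tfrac{\pi}{2}\sum_k\mathfrak{i}_{j,k}$. Finally, since $f$ is $2\pi$-periodic in each variable (again because its frequencies are integers) and $\tfrac{\pi}{2}\cdot 4 = 2\pi$, each coordinate may be reduced modulo $4$ inside the argument of $f$ without changing its value, producing exactly the grid points $p_{\alpha,\mathfrak{i}}$.

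I do not expect a deep obstacle; the main point requiring care is the bookkeeping of the iteration. The shift identity must be established as an equality of functions valid at all $\theta$, not merely at the origin, so that it can legitimately be composed, and one must verify that the frequency support never escapes $\{-1,0,1\}^m$ under repeated application, so that the identity remains applicable at each stage. Once these are in place, the combinatorial expansion and the periodic reduction are routine. The concluding efficiency claim then follows immediately, since each $p_{\alpha,\mathfrak{i}}$ lies in $\tfrac{\pi}{2}\mathbb{Z}^m$, so that Assumption \ref{assumption:fast_sampling_of_expected_value_of_pauli_strings}, together with the assumption that $\mathcal{M}$ has few non-zero Pauli coefficients, makes every term in the sum efficiently samplable.
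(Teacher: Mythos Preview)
Your proposal is correct and follows essentially the same route as the paper. The paper's proof simply defers to Lemma~\ref{lemma:parametershift} in the appendix, whose proof in turn reads ``direct computation using induction on $|\alpha|$'' with details omitted; your central-difference operators $S_j$, the verification on Fourier modes, and the iteration via commuting shifts constitute exactly such an induction written out in full, and your final appeal to Assumption~\ref{assumption:fast_sampling_of_expected_value_of_pauli_strings} for the efficiency claim matches the paper verbatim.
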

	\begin{proof}
		For the proof of the formula for $D^{\alpha}f(0)$ we refer to Notation \ref{notation:shiftvectors} and Lemma \ref{lemma:parametershift} in Appendix \ref{appendix:theoretical_background}. Regarding efficient evaluation, we observe that $p_{\alpha, \mathfrak{i}}\in \frac{\pi}{2}\mathbb{Z}^m$, so the claim follows from Assumption \ref{assumption:fast_sampling_of_expected_value_of_pauli_strings}.
	\end{proof}

	So, in order to approximate $f$ by its Taylor polynomial of order $L\in\mathbb{Z}_{\geq 0}$, we compute all partial derivatives $D^{\alpha}f (0)$ with $|\alpha |\leq L$ using Lemma \ref{lemma:partials_for_our_function_f} and subsequently use the approximation
	\begin{align*}
		f(\theta_1 ,\dots ,\theta_m )\approx \sum_{\alpha\in (\mathbb{Z}_{\geq 0})^m\text{ with }|\alpha |\leq L}\frac{D^\alpha f (0)}{\alpha_1 !\cdots \alpha_m !} \theta_1^{\alpha_1}\cdots \theta_m^{\alpha_m}.
	\end{align*}
	
	\begin{implementationremark}
		\label{implementationremark:store_values}
		Note that, with the notation from Lemma \ref{lemma:partials_for_our_function_f}, it will often be the case that $p_{\alpha ,\mathfrak{i}}=p_{\beta ,\mathfrak{j}}$, but $(\alpha ,\mathfrak{i})\neq (\beta ,\mathfrak{j})$. So, by storing the values for the points at which $f$ has already been evaluated, we reduce the number of times we have to sample $f$.
	\end{implementationremark}

	The pseudocode for this algorithm can be found in Algorithm \ref{algorithm:taylor}. In Theorem \ref{thm:performance_guarantees_taylor} we give some performance guarantees for Algorithm \ref{algorithm:taylor}. The proof can be found in Appendix \ref{appendix:proof_taylor}.
	
		\begin{algorithm}
		\SetKwInOut{Input}{Input}
		\SetKwInOut{Output}{Output}
		
		\Input{order $L$, observable $\mathcal{M}$, parametrized unitary $U$}
		\Output{approximation $\tilde{f}$ for $f$, where $f(\theta)= \langle0^n|U^\dag (\theta)\mathcal{M}U(\theta)|0^n\rangle$}
		
		Decompose $\mathcal{M}=\sum_{t=1}^{T} a_t \mathcal{Q}_t$, where $a_t\in\mathbb{R}$ coefficient and $\mathcal{Q}_t\in\{I,X,Y,Z\}^{\otimes n}$ Pauli operator for all $t$
		
		\For{$\alpha\in (\mathbb{Z}_{\geq 0})^m$ with $|\alpha |\leq L$}
			{
				Compute
				\begin{align*}
					D^{\alpha}f (0) = \frac{1}{2^{|\alpha |}}  \sum_{\mathfrak{i}\in\{-1,1\}^{|\alpha |}} \mathfrak{i}^{(1,\dots ,1)} \sum_{t=1}^{T} a_t \langle0^n|U^\dag (p_{\alpha ,\mathfrak{i}})\mathcal{Q}_tU(p_{\alpha ,\mathfrak{i}})|0^n\rangle
				\end{align*}
				using Assumption \ref{assumption:fast_sampling_of_expected_value_of_pauli_strings} and Implementation Remark \ref{implementationremark:store_values}
			}

		Set
		\begin{align*}
			\tilde{f}(\theta):= \sum_{\alpha\in (\mathbb{Z}_{\geq 0})^m\text{ with }|\alpha |\leq L}\frac{D^\alpha f (0)}{\alpha_1 !\cdots \alpha_m !} \theta_1^{\alpha_1}\cdots \theta_m^{\alpha_m}
		\end{align*}
		
		\Return $\tilde{f}$
		
		\caption{Approximation with Taylor polynomial}
		\label{algorithm:taylor}
	\end{algorithm}
	
	\begin{theorem}
		\label{thm:performance_guarantees_taylor}
		Assume Algorithm \ref{algorithm:taylor} is executed with $L\in\mathbb{Z}_{\geq 0}$, $\mathcal{M}$ and $U$, where the latter two are as in the beginning of Section \ref{sec:algorithms}. Let $f\colon\mathbb{R}^m\to\mathbb{R}$ be as above and let  $\tilde{f}\colon\mathbb{R}^m\to\mathbb{R}$ be the output of the algorithm. Then the following holds:
		\begin{enumerate}[(i)]
			\item \label{thm:performance_guarantees_taylor_number_f_eval} If $L\leq m$, then, during execution of the algorithm, $f$ is sampled at no more than $\frac{4^L}{L!}m^L$ points in $\frac{\pi}{2}\mathbb{Z}^m$,
			\item \label{thm:performance_guarantees_taylor_error_estimate} Writing $\mathcal{M}=\sum_{(P_1 ,\dots ,P_n)\in\{I,X,Y,Z\}^n} a_{(P_1 ,\dots ,P_n)} P_1\otimes\cdots\otimes P_n$ as in the beginning of Section \ref{sec:algorithms}, we have the following bound on the approximation error for all $\theta\in\mathbb{R}^m$:
			\begin{align*}
				\left|\tilde{f}(\theta) -f(\theta)\right|\leq & 
				\left(
				\sum_{(P_1 ,\dots ,P_n)\in\{I,X,Y,Z\}^n} \left| a_{(P_1 ,\dots ,P_n)}\right|
				\right)\\
				& \cdot
				\left(\exp\left(\Vert \theta\Vert_1\right)-\sum_{k=0}^{L} \frac{\Vert \theta\Vert_1^k}{k!}\right) 
				,
			\end{align*}
			where $\Vert\cdot\Vert_1$ denotes the $1$-norm on $\mathbb{R}^m$. In particular, if $\Vert \theta\Vert_1\leq 1 +\frac{L}{2}$, we have:
			\begin{align*}
				\left|\tilde{f}(\theta) -f(\theta)\right|
				\leq
				2
				\left(
				\sum_{(P_1 ,\dots ,P_n)\in\{I,X,Y,Z\}^n} \left| a_{(P_1 ,\dots ,P_n)}\right|
				\right)
				\frac{\Vert \theta\Vert_1^{L+1}}{(L+1)!}
				.
			\end{align*}
		\end{enumerate}
	\end{theorem}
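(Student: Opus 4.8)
The plan is to treat the two parts separately: part (\ref{thm:performance_guarantees_taylor_number_f_eval}) is a combinatorial counting argument, while part (\ref{thm:performance_guarantees_taylor_error_estimate}) rests entirely on a single uniform bound for the Taylor coefficients of $f$.

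For part (\ref{thm:performance_guarantees_taylor_number_f_eval}), I would first observe that, thanks to Implementation Remark \ref{implementationremark:store_values}, the number of points at which $f$ is actually sampled is bounded above by the number of distinct grid points $p_{\alpha,\mathfrak{i}}$, and hence by the total number of pairs $(\alpha,\mathfrak{i})$ that occur. The multiindices $\alpha\in(\mathbb{Z}_{\geq 0})^m$ with $|\alpha|\leq L$ number exactly $\binom{m+L}{L}$, and for each fixed $\alpha$ the formula in Lemma \ref{lemma:partials_for_our_function_f} evaluates $f$ at the $2^{|\alpha|}\leq 2^L$ points indexed by $\mathfrak{i}\in\{-1,1\}^{|\alpha|}$. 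Hence the number of sampled points is at most $\binom{m+L}{L}\,2^L$. I would then estimate $\binom{m+L}{L}=\frac{(m+1)\cdots(m+L)}{L!}\leq \frac{(m+L)^L}{L!}$ and invoke the hypothesis $L\leq m$ to replace $m+L$ by $2m$, giving $\binom{m+L}{L}\leq \frac{(2m)^L}{L!}=\frac{2^L m^L}{L!}$. Multiplying by $2^L$ yields the claimed bound $\frac{4^L}{L!}m^L$.

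For part (\ref{thm:performance_guarantees_taylor_error_estimate}), the decisive step is a uniform bound on the partial derivatives at the origin: I claim $|D^\alpha f(0)|\leq \sum_{(P_1,\dots,P_n)}|a_{(P_1,\dots,P_n)}|=:A$ for every $\alpha$. This follows directly from Lemma \ref{lemma:partials_for_our_function_f}, since that formula writes $D^\alpha f(0)$ as the $2^{-|\alpha|}$-scaled sum of $2^{|\alpha|}$ terms $\pm f(p_{\alpha,\mathfrak{i}})$, and each value satisfies $|f(p)|\leq \sum_{(P_1,\dots,P_n)} |a_{(P_1,\dots,P_n)}|\,|\langle\psi(p)|(P_1\otimes\cdots\otimes P_n)|\psi(p)\rangle|\leq A$, because every Pauli string is unitary with eigenvalues $\pm 1$, so $|\langle\psi(p)|(P_1\otimes\cdots\otimes P_n)|\psi(p)\rangle|\leq 1$. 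Since $f$ is a finite sum of exponentials $\sum_\omega c_\omega e^{i\omega^T\theta}$, it is entire and equals its multivariate Taylor series about $0$; consequently $\tilde f$ is exactly the degree-$L$ truncation of that series and the error is the tail
\begin{align*}
 f(\theta)-\tilde f(\theta)=\sum_{|\alpha|>L}\frac{D^\alpha f(0)}{\alpha_1!\cdots\alpha_m!}\,\theta_1^{\alpha_1}\cdots\theta_m^{\alpha_m}.
\end{align*}
I would then bound this tail termwise using $|D^\alpha f(0)|\leq A$ and sum via the multinomial theorem: for each fixed $k$ one has $\sum_{|\alpha|=k}\frac{|\theta_1|^{\alpha_1}\cdots|\theta_m|^{\alpha_m}}{\alpha_1!\cdots\alpha_m!}=\frac{\Vert\theta\Vert_1^k}{k!}$, so that $\sum_{|\alpha|>L}\frac{\prod_j|\theta_j|^{\alpha_j}}{\prod_j\alpha_j!}=\exp(\Vert\theta\Vert_1)-\sum_{k=0}^L\frac{\Vert\theta\Vert_1^k}{k!}$, giving the first, general bound. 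For the sharpened estimate under $\Vert\theta\Vert_1\leq 1+\frac{L}{2}$, I would set $x:=\Vert\theta\Vert_1$, factor the exponential tail as $\sum_{k>L}\frac{x^k}{k!}=\frac{x^{L+1}}{(L+1)!}\sum_{j\geq 0}x^j\prod_{i=1}^j\frac{1}{L+1+i}$, bound each product by $(L+2)^{-j}$, and sum the geometric series $\sum_{j\geq 0}(x/(L+2))^j$; since $x\leq \frac{L+2}{2}$ forces $x/(L+2)\leq \tfrac12$, the series is at most $2$, which yields the factor-$2$ bound.

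The only genuinely substantive point is the uniform derivative bound $|D^\alpha f(0)|\leq A$; everything else is bookkeeping. I expect the main obstacle to be phrasing this bound cleanly — one must verify that the parameter-shift representation of Lemma \ref{lemma:partials_for_our_function_f} does not accumulate its $2^{|\alpha|}$ summands (the $2^{-|\alpha|}$ prefactor cancels them exactly) and that the operator-norm estimate $|\langle\psi|(P_1\otimes\cdots\otimes P_n)|\psi\rangle|\leq 1$ is applied to each Pauli term before summing against the $|a_{(P_1,\dots,P_n)}|$. With that in hand, convergence of the Taylor series is free because $f$ is entire, and the two analytic estimates reduce to standard exponential-tail manipulations.
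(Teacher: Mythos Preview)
Your proposal is correct and follows essentially the same approach as the paper: the counting argument for (\ref{thm:performance_guarantees_taylor_number_f_eval}) is identical, and for (\ref{thm:performance_guarantees_taylor_error_estimate}) the paper likewise bounds $\Vert f\Vert_\infty\leq A$ via the Pauli decomposition and then controls the Taylor tail using the parameter-shift representation together with the multinomial theorem (packaged there as a separate lemma, Lemma~\ref{lemma:taylorerror}). The only cosmetic difference is that the paper appeals to ``well-known remainder term estimates for the exponential function'' for the factor-$2$ bound, whereas you spell out the geometric-series argument explicitly.
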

	
	\subsection{Algorithm based on  trigonometric polynomials}

	Recall that we can write
	\begin{align*}
		f(\theta) =\sum_{\omega\in\{-1,0,1\}^m} c_\omega e^{i\omega^T \theta}
		\text{ for all }\theta\in\mathbb{R}^m,
	\end{align*}
	where $c_\omega\in\mathbb{C}$ and $c_\omega = \overline{c_{-\omega}}$ for all $\omega\in\{-1,0,1\}^m$. We denote the set of all functions of this form as $H$ and give $H$ the structure of a reproducing kernel Hilbert space with reproducing kernel $K$ by defining
	\begin{align*}
		\langle g_1 ,g_2\rangle_H= \int_{[-\pi ,\pi]^m} g_1 (z) g_2 (z) \mathrm{d}z, \ g_1,g_2\in H,
	\end{align*}
	and
	\begin{align}\label{eq:kernel_definition}
		K(x,z) = \frac{1}{(2\pi )^m} \prod_{j=1}^m (1+2\cos(x_j - z_j )), \ x,z\in\mathbb{R}^m.
	\end{align}
	For a more rigorous treatment of these facts, see Appendix \ref{appendix:theoretical_background}. The idea is now to sample $f$ at certain points in $\frac{\pi}{2}\mathbb{Z}^m$ using Assumption \ref{assumption:fast_sampling_of_expected_value_of_pauli_strings} and to subsequently interpolate with elements of $H$ by exploiting the reproducing kernel Hilbert space structure.
	
	So let $L\in\mathbb{Z}$, $0\leq L\leq m$, be the desired order of the approximation, which will be provided as an input to the algorithm. We then determine all points $p_1 ,\dots , p_D\in \frac{\pi}{2}\{-1,0,1\}^m$ with the property that at most $L$ entries are non-zero. More formally, $\{p_1 ,\dots , p_D\} = \left(\frac{\pi}{2}\{-1, 0, 1\}^m\right)\cap\mathcal{I}_L$, where $\mathcal{I}_L = \left\{z\in\mathbb{R}^m\vert \operatorname{card}(\{j\in\{1,\dots ,m\}\vert z_j\neq 0\})\leq L\right\}$, and $p_i\neq p_j$ whenever $i\neq j$. We then determine $f(p_1),\dots ,f(p_D)$ using Assumption \ref{assumption:fast_sampling_of_expected_value_of_pauli_strings} and subsequently obtain coefficients $\eta_1 ,\dots ,\eta_D\in\mathbb{R}$ by solving the linear system of equations
	\begin{align}\label{eq:kernel_lgs}
		\left(K(p_i ,p_j )\right)_{1\leq i,j\leq D}\cdot\eta =
		\begin{pmatrix}
			f(p_1) \\
			\vdots\\
			f(p_D)
		\end{pmatrix}
		, \text{ where } \eta\in\mathbb{R}^D .
	\end{align}
	
	\begin{lemma}
		\label{lemma:lgs_has_unique_solution}
		The system (\ref{eq:kernel_lgs}) of linear equations has a uniquely determined solution $\eta\in\mathbb{R}^D$.
	\end{lemma}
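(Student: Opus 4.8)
The plan is to show that the coefficient matrix $G := \left(K(p_i,p_j)\right)_{1\leq i,j\leq D}$ is invertible. Since every entry $K(p_i,p_j)$ is a product of cosines and hence real, $G$ is a real (and visibly symmetric) matrix, and the right-hand side of \eqref{eq:kernel_lgs} is real; so once $G$ is known to be invertible, the existence and uniqueness of a solution $\eta\in\mathbb{R}^D$ is immediate.

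The crucial observation is that $K$ admits an explicit finite-dimensional feature map. Expanding $1+2\cos t=\sum_{k\in\{-1,0,1\}}e^{ikt}$ in \eqref{eq:kernel_definition} yields $K(x,z)=\frac{1}{(2\pi)^m}\sum_{\omega\in\{-1,0,1\}^m}e^{i\omega^T x}e^{-i\omega^T z}$. Setting $\phi(p):=(2\pi)^{-m/2}\bigl(e^{-i\omega^T p}\bigr)_{\omega\in\{-1,0,1\}^m}\in\mathbb{C}^{3^m}$ and collecting the feature vectors into $\Phi:=[\phi(p_1),\dots,\phi(p_D)]\in\mathbb{C}^{3^m\times D}$, a direct check gives $G=\Phi^{*}\Phi$, i.e.\ $G$ is exactly the Gram matrix of $\phi(p_1),\dots,\phi(p_D)$ for the standard Hermitian inner product. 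Hence $G$ is Hermitian positive semidefinite, and it is positive definite — equivalently invertible — if and only if $\phi(p_1),\dots,\phi(p_D)$ are linearly independent. (One may phrase this intrinsically via the reproducing property $\langle K(\cdot,p_i),K(\cdot,p_j)\rangle_H=K(p_i,p_j)$, which identifies $\eta^{T}G\eta=\bigl\Vert\sum_i\eta_i K(\cdot,p_i)\bigr\Vert_H^2$ and reduces the claim to linear independence of the kernel sections $K(\cdot,p_i)$ in $H$; but the feature-map formulation is more self-contained.)

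It thus remains to prove that the $\phi(p_i)$ are linearly independent, and I would do this by embedding into the full grid. Since $p_1,\dots,p_D$ are distinct points of $\frac{\pi}{2}\{-1,0,1\}^m$, the matrix $\Phi$ is (up to the global scalar $(2\pi)^{-m/2}$) a selection of $D$ columns of the full $3^m\times 3^m$ matrix $W=\bigl(e^{-i\omega^T p}\bigr)_{\omega,\,p}$, indexed by all $\omega\in\{-1,0,1\}^m$ and all $p\in\frac{\pi}{2}\{-1,0,1\}^m$. Because $\omega^T p=\sum_{j}\omega_j p_j$ factorizes coordinatewise, $W$ is the $m$-fold Kronecker product $w^{\otimes m}$ of the single $3\times 3$ block $w=\bigl(e^{-i\frac{\pi}{2}\omega_j s_j}\bigr)_{\omega_j,\,s_j\in\{-1,0,1\}}$, where $p_j=\frac{\pi}{2}s_j$. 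A short computation gives $\det w=4i\neq 0$, so $W=w^{\otimes m}$ is invertible; in particular any $D$ of its columns, and hence $\phi(p_1),\dots,\phi(p_D)$, are linearly independent. This makes $G$ positive definite, completing the argument.

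The only genuinely non-routine step is the last one, the linear independence of the feature vectors; the Kronecker-product factorization combined with the nonvanishing $3\times 3$ determinant is what makes it transparent. Everything else — reducing to invertibility of $G$, the Gram identity $G=\Phi^{*}\Phi$, and passing from invertibility to a unique real solution — is bookkeeping.
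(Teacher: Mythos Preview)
Your proof is correct and shares the same high-level structure as the paper's: both observe that $G$ is the Gram matrix of the kernel sections $K_{p_1},\dots,K_{p_D}$ (you via the explicit feature map $\phi$ into $\mathbb{C}^{3^m}$, the paper via the reproducing property $K(p_i,p_j)=\langle K_{p_i},K_{p_j}\rangle_H$) and then reduce invertibility of $G$ to linear independence of those sections. The difference lies in how that linear independence is established. The paper invokes Lemma~\ref{lemma:basis_for_H}, which shows that the full family $(K_q)_{q\in\frac{\pi}{2}\{-1,0,1\}^m}$ is a basis of $H$ by a dimension count ($\dim_{\mathbb R}H=3^m$) together with a spanning identity; linear independence of any $D$ distinct members follows. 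You instead unfold the feature map concretely, recognize the full $3^m\times 3^m$ evaluation matrix as the Kronecker power $w^{\otimes m}$, and check $\det w=4i\neq 0$ directly. Your route is more self-contained and avoids setting up the RKHS machinery, at the cost of a small explicit determinant computation; the paper's route is terser here because Lemma~\ref{lemma:basis_for_H} is developed anyway for later use in the appendix. Either argument is perfectly adequate for this lemma.
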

	\begin{proof}
		The elements $p_1 ,\dots ,p_D$ of $\frac{\pi}{2}\{-1,0,1\}^m$ are pairwise distinct. So, by Lemma \ref{lemma:basis_for_H} in Appendix \ref{appendix:theoretical_background}, the matrix $\left(K(p_i ,p_j )\right)_{1\leq i,j\leq D}\in\mathbb{R}^{D\times D}$ is the Gram matrix of $D$ linearly independent vectors in $H$ and hence invertible. The claim follows.
	\end{proof}
	
	We then use the approximation
	\begin{align*}
		f(\theta_1 ,\dots ,\theta_m )\approx\sum_{j=1}^{D} \eta_j K(p_j, (\theta_1 ,\dots ,\theta_m )) .
	\end{align*}
	Note that the approximation coincides with $f$ on $\{p_1 ,\dots ,p_D\}$, since $\eta$ solves the above linear system of equations.
	
	\begin{implementationremark}
		\label{implementationremark:K_not_numerically_stable}
		If $m$ is large, then evaluating $K$ according to formula (\ref{eq:kernel_definition}) will not be numerically stable, since $\frac{1}{(2\pi )^m}$ will be very close to $0$. For our numerical experiments we instead implemented
		\begin{align*}
			\tilde{K}(x,z) = \prod_{j=1}^m \frac{1+2\cos(x_j - z_j )}{3}.
		\end{align*}
		Note that $\tilde{K} = \left(\frac{2\pi}{3}\right)^m K$, so the error we are making is a constant factor, which will automatically be absorbed into the coefficient vector $\eta$ when solving the linear system of equations. In fact, $\tilde{\eta} = \left(\frac{3}{2\pi}\right)^m \eta$ and thus $\sum_{j=1}^{D}\tilde{\eta}_j \tilde{K}(p_j,\cdot ) = \sum_{j=1}^{D}\eta_j K(p_j,\cdot )$. 
	\end{implementationremark}

	The pseudocode for this algorithm can be found in Algorithm \ref{algorithm:fourier}. In Theorem \ref{thm:performance_guarantees_fourier} we give some performance guarantees for Algorithm \ref{algorithm:fourier}. The proof can be found in Appendix \ref{appendix:proof_fourier}.
	
	\begin{algorithm}
		\SetKwInOut{Input}{Input}
		\SetKwInOut{Output}{Output}
		
		\Input{order $L$, observable $\mathcal{M}$, parametrized unitary $U$}
		\Output{approximation $\tilde{f}$ for $f$, where $f(\theta)= \langle0^n|U^\dag (\theta)\mathcal{M}U(\theta)|0^n\rangle$}
		
		Decompose $\mathcal{M}=\sum_{t=1}^{T} a_t \mathcal{Q}_t$, where $a_t\in\mathbb{R}$ coefficient and $\mathcal{Q}_t\in\{I,X,Y,Z\}^{\otimes n}$ Pauli operator for all $t$
		
		Determine the set $\left(\frac{\pi}{2}\{-1, 0, 1\}^m\right)\cap\mathcal{I}_L$, see Lemma \ref{lemma:fewer_tuples_enough_for_fourier} in Appendix \ref{appendix:theoretical_background}, and denote its elements as $p_1 ,\dots , p_D$ with $p_i\neq p_j$ for $i\neq j$
		
		\For{$j=1,\dots ,D$}
		{
			Compute
			\begin{align*}
				f(p_j)= \sum_{t=1}^{T} a_t \langle0^n|U^\dag (p_j)\mathcal{Q}_t U(p_j)|0^n\rangle
			\end{align*}
			using Assumption \ref{assumption:fast_sampling_of_expected_value_of_pauli_strings}
		}
		
		Find the uniquely determined $\eta\in\mathbb{R}^D$ (see Lemma \ref{lemma:lgs_has_unique_solution} and Implementation Remark \ref{implementationremark:K_not_numerically_stable}) solving the linear system of equations
		\begin{align*}
			\left(K(p_i ,p_j )\right)_{1\leq i,j\leq D}\cdot\eta =
			\begin{pmatrix}
				f(p_1) \\
				\vdots\\
				f(p_D)
			\end{pmatrix}
		\end{align*}
		
		Set
		\begin{align*}
			\tilde{f}(\theta):= \sum_{j=1}^{D} \eta_j K(p_j, \theta)
		\end{align*}
		
		\Return $\tilde{f}$
		
		\caption{Approximation with trigonometric polynomial}
		\label{algorithm:fourier}
	\end{algorithm}
	
	\begin{theorem}
		\label{thm:performance_guarantees_fourier}
		Assume Algorithm \ref{algorithm:fourier} is executed with $L\in\{0,1,\dots ,m\}$, $\mathcal{M}$ and $U$, where $m$, $\mathcal{M}$, and $U$ are as in the beginning of Section \ref{sec:algorithms}. Let $f\colon\mathbb{R}^m\to\mathbb{R}$ be as above and let  $\tilde{f}\colon\mathbb{R}^m\to\mathbb{R}$ be the output of the algorithm. Then the following holds:
		\begin{enumerate}[(i)]
			\item \label{thm:performance_guarantees_fourier_number_f_eval} During execution of the algorithm, $f$ is sampled at no more than $\frac{3^L}{L!}m^L$ points in $\frac{\pi}{2}\mathbb{Z}^m$.
			\item \label{thm:performance_guarantees_fourier_projection} $\tilde{f}\in H$. In fact, $\tilde{f}$ is the orthogonal projection of $f$ onto
			\begin{align*}
				\operatorname{span}_{\mathbb{R}}\left(\left\{K_q\Big\vert q\in \left(\frac{\pi}{2}\{-1, 0, 1\}^m\right)\cap\mathcal{I}_L\right\}\right)
			\end{align*}
			(see Lemma \ref{lemma:fewer_tuples_enough_for_fourier} in Appendix \ref{appendix:theoretical_background}) wrt.\ $\langle\cdot ,\cdot\rangle_H$.
			\item \label{thm:performance_guarantees_fourier_minimal_norm} $\tilde{f}$ is the uniquely determined minimal-norm element of $H$ which agrees with $f$ on $\left(\frac{\pi}{2}\{-1, 0, 1\}^m\right)\cap\mathcal{I}_L$.
			\item \label{thm:performance_guarantees_fourier_coincides_on_I_L} $\tilde{f}$ coincides with $f$ on $\mathcal{I}_L$. In particular, $\tilde{f}$ coincides with $f$ on any subspace of $\mathbb{R}^m$ that is spanned by at most $L$ of the coordinate axes.
			\item \label{thm:performance_guarantees_fourier_coordinate_axes} If $L\geq 1$, then $\tilde{f}$ coincides with $f$ on the coordinate axes in $\mathbb{R}^m$.
			\item \label{thm:performance_guarantees_fourier_whole_space} If $L=m$, then $\tilde{f}$ coincides with $f$ on all of $\mathbb{R}^m$, i.e., we completely recover the function $f$.
			\item \label{thm:performance_guarantees_fourier_error_estimate}
			For all $\alpha\in (\mathbb{Z}_{\geq 0})^m$ with $|\alpha |\leq L$ we have
			\begin{align*}
				D^{\alpha}(f-\tilde{f})(0) = 0.
			\end{align*}
			In particular, the power series expansions of $f$ and $\tilde{f}$ around $0\in\mathbb{R}^m$ coincide up to order at least $L$. 
		\end{enumerate}
	\end{theorem}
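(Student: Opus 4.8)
The plan is to obtain parts (ii) and (iii) from the general theory of the reproducing kernel Hilbert space $H$, to prove (iv) as the genuine core of the theorem, and then to read off (i), (v), (vi) and (vii) as corollaries or as an elementary count. Throughout I write $K_x = K(x,\cdot)$ and use the reproducing property $g(x) = \langle g, K_x\rangle_H$ for $g\in H$, established in Appendix \ref{appendix:theoretical_background}, together with the fact that $f\in H$.

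First I would settle (ii). Let $V = \operatorname{span}_{\mathbb{R}}\{K_{p_1},\dots,K_{p_D}\}$. The orthogonal projection $Pf$ of $f$ onto $V$ is the unique $\sum_j\eta_j K_{p_j}\in V$ with $\langle f - Pf, K_{p_i}\rangle_H = 0$ for all $i$; by the reproducing property and $\langle K_{p_j}, K_{p_i}\rangle_H = K(p_i,p_j)$, these normal equations are exactly the system (\ref{eq:kernel_lgs}). Lemma \ref{lemma:lgs_has_unique_solution} guarantees a unique solution $\eta$, so the algorithm's output satisfies $\tilde f = Pf\in H$. For (iii), note that $\{g\in H : g(p_i)=f(p_i)\ \forall i\} = \{g : \langle g, K_{p_i}\rangle_H = f(p_i)\ \forall i\}$; writing any such $g$ as $\tilde f + h$ forces $h\in V^\perp$, so $\Vert g\Vert_H^2 = \Vert\tilde f\Vert_H^2 + \Vert h\Vert_H^2$ with equality iff $h=0$, identifying $\tilde f$ as the unique minimal-norm interpolant.

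The heart of the argument is (iv). For $J\subseteq\{1,\dots,m\}$ with $|J|\le L$, let $E_J = \{z\in\mathbb{R}^m : z_l=0 \text{ for } l\notin J\}$, so that $\mathcal{I}_L = \bigcup_{|J|\le L} E_J$ and it suffices to prove $f=\tilde f$ on each $E_J$. Since $f,\tilde f\in H$, restricting their expansions $\sum_{\omega\in\{-1,0,1\}^m} c_\omega e^{i\omega^T\theta}$ to $E_J$ produces trigonometric polynomials in $(\theta_l)_{l\in J}$ with each frequency in $\{-1,0,1\}$. I would then show that such a polynomial is determined by its values on the grid $G_J = \frac{\pi}{2}\{-1,0,1\}^J\subseteq E_J$: the evaluation map from the $3^{|J|}$ coefficients to the $3^{|J|}$ grid values is the $|J|$-fold Kronecker product of the single-variable matrix $\big(e^{i\nu k\pi/2}\big)_{k,\nu\in\{-1,0,1\}}$, which, after factoring out a diagonal, is a Vandermonde matrix in the distinct nodes $e^{-i\pi/2},1,e^{i\pi/2}$ and hence invertible; so is its Kronecker power. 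Because every point of $G_J$ has at most $|J|\le L$ nonzero entries, $G_J\subseteq\big(\frac{\pi}{2}\{-1,0,1\}^m\big)\cap\mathcal{I}_L = \{p_1,\dots,p_D\}$, where $\tilde f = f$ by construction. The two restrictions therefore share all grid values, so by invertibility they have identical coefficients and coincide on all of $E_J$.

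The remaining parts follow quickly. Part (v) is the case $|J|=1\le L$ of (iv), and part (vi) is the case $L=m$, for which $\mathcal{I}_L = \mathbb{R}^m$. For (vii), if $|\alpha|\le L$ then its support $S_\alpha = \{j:\alpha_j>0\}$ has $|S_\alpha|\le L$, and $D^\alpha g(0)$ depends only on the restriction of $g$ to $E_{S_\alpha}\subseteq\mathcal{I}_L$; since $f=\tilde f$ there, $D^\alpha(f-\tilde f)(0)=0$, which also yields the agreement of Taylor coefficients up to order $L$. Finally (i) is pure counting: the number of sampled points is $|(\frac{\pi}{2}\{-1,0,1\}^m)\cap\mathcal{I}_L| = \sum_{k=0}^L\binom{m}{k}2^k$, which an elementary estimate using $\binom{m}{k}\le m^k/k!$ and $L\le m$ bounds by $\frac{3^L}{L!}m^L$ (cf.\ Lemma \ref{lemma:fewer_tuples_enough_for_fourier}). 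I expect the main obstacle to be (iv), and within it the invertibility of the grid-evaluation (Vandermonde/Kronecker) map together with verifying that $G_J$ is genuinely among the sampled points.
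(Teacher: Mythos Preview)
Your proposal is correct. The main departure from the paper is in part (iv). The paper proves (iv) entirely within the RKHS framework: for any $p\in\mathcal{I}_L$, Lemma \ref{lemma:basis_for_subspaces_of_H} gives $K_p\in V$, and then (\ref{lemma:lgsHasSolution_iv}) in Lemma \ref{lemma:lgsHasSolution} immediately yields $\tilde f(p)=(\mathcal{P}_V f)(p)=f(p)$. Your route instead bypasses the kernel machinery and argues by unique trigonometric interpolation on each coordinate subspace $E_J$, via the invertibility of the Kronecker--Vandermonde evaluation matrix. Both are sound; your argument is more elementary and self-contained, while the paper's approach keeps everything phrased in terms of the reproducing kernel and simultaneously delivers the structural fact $K_p\in V$ for all $p\in\mathcal{I}_L$, which is of independent interest and feeds into Lemma \ref{lemma:basis_for_subspaces_of_H}. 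A similar remark applies to (vii): you observe directly that $D^\alpha g(0)$ depends only on $g|_{E_{S_\alpha}}$, whereas the paper invokes the parameter shift rule (Lemma \ref{lemma:parametershift}) to write $D^\alpha(f-\tilde f)(0)$ as a sum over points $p_{\alpha,\mathfrak{i}}\in\mathcal{I}_L$. For (i), the paper bounds the exact count $\sum_{k\le L}\binom{m}{k}2^k$ by the cruder $3^L\binom{m}{L}\le\frac{3^L}{L!}m^L$ via an overcounting argument; your direct estimate on the exact sum also works but needs a small case check for $L\le 1$.
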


	\begin{remark}
		In the setting of Theorem \ref{thm:performance_guarantees_fourier}, it is easy to obtain crude explicit bounds on the approximation error $|\tilde{f}(\theta)-f(\theta)|$ for $\theta\in\mathbb{R}^m$ by coarsely estimating the coefficients in the power series expansion of $\tilde{f}-f$ using (\ref{thm:performance_guarantees_fourier_error_estimate}) in Theorem \ref{thm:performance_guarantees_fourier} and Lemma \ref{lemma:parametershift} in Appendix \ref{appendix:theoretical_background}. Since we believe these bounds to be of limited use in practice, we do not state them here. However, we are confident that stronger bounds can be established using some of the ideas from Appendix \ref{appendix:theoretical_background}. We leave this for future work.
	\end{remark}
	
	\subsection{Comparison with existing methods}

	We now compare our algorithms to some of the existing methods mentioned in Section \ref{sec:introduction}.

	\subsubsection{Quantum circuit simulation}
	Our algorithms, as well as the methods mentioned in Section \ref{sec:introduction}, deal with the approximation (resp. computation) of the expected value of some observable, which -- strictly speaking -- is less general than quantum circuit simulation.
	Our algorithms require access to an oracle, see Assumption \ref{assumption:fast_sampling_of_expected_value_of_pauli_strings}. In contrast, the method introduced in \cite{Mitarai2022} and the various methods \cite{begusic2023fast},  \cite{begusic2023simulating}, \cite{Nemkov_2023}, \cite{rudolph2023classical}, \cite{fontana2023classical} used to simulate IBM's Eagle kicked Ising experiment \cite{Kim2023} do \emph{not} require access to such an oracle. Our algorithms could also be used to simulate this experiment, since, for $\theta\in\frac{\pi}{2}\mathbb{Z}^m$, all gates in the occurring circuits are Clifford (i.e., Assumption \ref{assumption:fast_sampling_of_expected_value_of_pauli_strings} is satisfied by the Gottesman-Knill theorem \cite{Gottesman1998} \cite{VanDenNes2010}). However, the previously mentioned methods are \textit{whitebox} methods, i.e., they exploit the internal structure of the circuit, whereas our algorithms only make use of the values of $f$ at certain grid points and assume no knowledge about the internal structure of the circuit. Consequently, one would expect our algorithms to perform significantly worse than most of the previously mentioned methods when it comes to the simulation of circuits consisting only of parametrized Pauli rotations and Clifford gates. However, these methods are less general than our algorithms, in the sense that we do not assume the non-parametrized gates to be Clifford.

	\subsubsection{Classical surrogates of quantum machine learning models}

	Algorithm \ref{algorithm:fourier} carries some similarities to the methods presented in \cite{Schreiber_2023} and \cite{landman2022classically}, which both also make use of the fact that the function computed by a variational quantum circuit can (under suitable assumptions on the parametrized gates) be written as a partial Fourier series \cite{Schuld_2021}. While we impose more restrictive assumptions on the parametrized quantum circuits (specifically, no data re-uploading and less general eigenvalue spectra for the generators of the parametrized gates), our methods can readily be adapted to work under the less restrictive assumptions considered in \cite{Schreiber_2023} and \cite{landman2022classically} respectively.
	
	In the setting considered in our work, the method introduced in \cite{Schreiber_2023} needs to query the trained quantum model at $3^m$ grid points. Ignoring the effect of measurement shot noise and quantum hardware noise, this method and Algorithm \ref{algorithm:fourier} are both guaranteed to perfectly recover the quantum model using $3^m$ grid points. However, as $m$ gets large, querying the quantum model at $3^m$ grid points quickly becomes prohibitive. In contrast to the method introduced in \cite{Schreiber_2023}, Algorithm \ref{algorithm:fourier} can still be applied in this scenario (by choosing an order $L$ less than $m$). While perfect recovery of the quantum model is no longer possible in this case, one is still guaranteed to obtain a good approximation close to a grid point, see Theorem \ref{thm:performance_guarantees_fourier}.
	
	Finally, we compare Algorithm \ref{algorithm:fourier} to the the method introduced in \cite{landman2022classically}. Instead of using a trained quantum model as an oracle, this method involves training a classical model -- derived from the architecture of the quantum model using Random Fourier Features \cite{random_fourier_features} -- on the original training data. As such, this method is qualitatively different from Algorithm \ref{algorithm:fourier}. However, there are some similarities; most importantly, the classical approximation is obtained using kernel ridge regression. In \cite{landman2022classically}, the occurring kernel is approximated using Random Fourier Features, whereas Algorithm \ref{algorithm:fourier} uses the kernel $K$ directly (note that $K$ can be efficiently evaluated classically).

	\section{Experiments}
	\label{sec:experiments}

    In this section, we describe our experiments with Algorithms \ref{algorithm:taylor} and \ref{algorithm:fourier}. For our experiments we chose a small, albeit representative parametrized quantum circuit obtained from the following construction:

    Consider an $n$-qubit quantum circuit with initial state $|0\rangle^{\otimes n}$. 
    The subsequent construction pairwise entangles all qubits and introduces a number of $T$-gates which is quadratic in the number of qubits. Repeat the following $d\geq 2$ times: First, apply parametrized $RX$-gates on each of the $n$ qubits. Then, for qubit pairs $(i, j)$, where $i=1,\dots , n-1$ and $j=i+1,\dots ,n$ and these pairs are listed in the lexicographic order, we apply a $CNOT$-gate with control $i$ and target $j$, a $T$-gate on qubit $j$, and another $CNOT$-gate with control $i$ and target $j$. Finally, we determine the expected value of the observable $Z^{\otimes n}$. The dimension of the parameter space $\mathbb{R}^m$ is then $m=nd$ and the number of $T$-gates is $\frac{dn(n-1)}{2}$. For a visual representation of the circuit, see Figure \ref{fig:circuit}.

\begin{figure}[htp]
	\centering
	\subcaptionbox{circuit}{\includegraphics[width=4.7in]{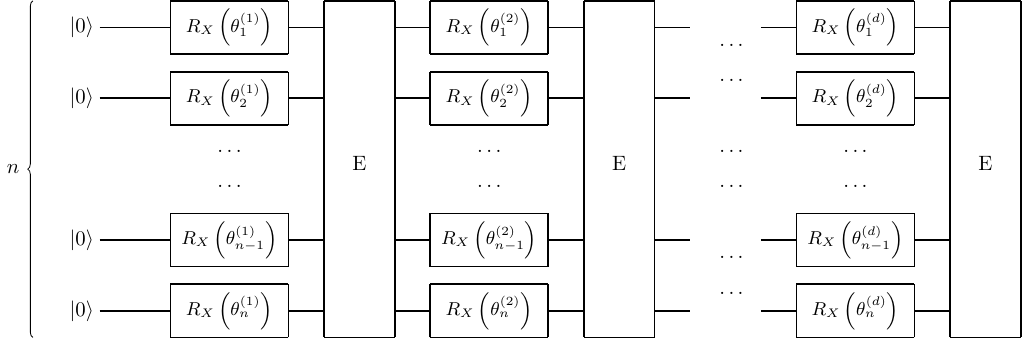}}
	
	\bigskip
	
	\centering
	\subcaptionbox{entangling block}{\includegraphics[width=4.7in]{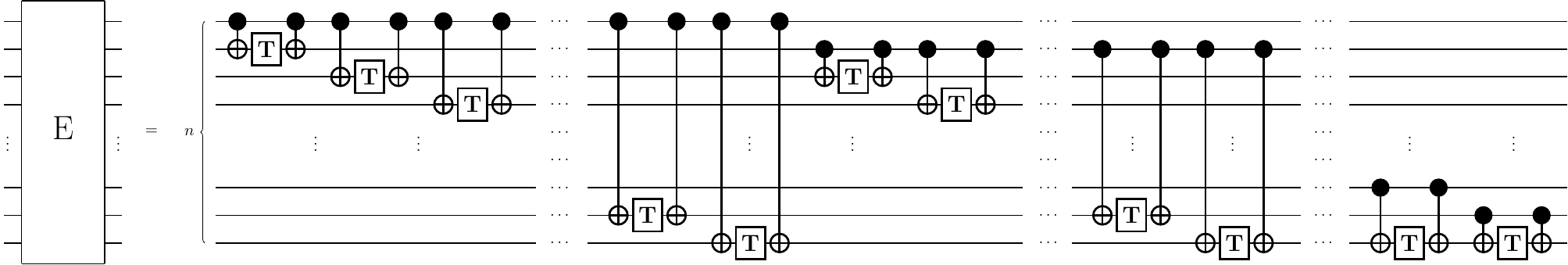}}

	\caption{This figure shows the circuit used in the experiments in Section \ref{sec:experiments}.}
	\label{fig:circuit}
\end{figure}

    This circuit is similar to circuits which are relevant in practice, since it is inspired by the quantum kernel featuring in \cite{Havl_ek_2019}. However, attempting to simulate this circuit by first replacing the $T$ gates by parametrized $RZ$ gates and subsequently using one of the whitebox methods for simulating circuits consisting of Pauli rotations and Clifford gates mentioned in the introduction, is infeasible because of the unfavourable scaling of the number of $T$ gates wrt.\ $n$. However, since $m$ scales more favourably with $n$, Algorithms \ref{algorithm:taylor} and \ref{algorithm:fourier} can be applied to construct a classical surrogate, assuming that samples can be obtained from a real quantum device.

    For our experiments we chose $n=8$ and $d=2$, which yields a paramteter space dimension $m=n d=16$. While the circuit is small and as such can be simulated exactly using standard methods, the local behaviour around grid points we observe in the experiments is still representative (even for large values of $n$ and $d$ where classical simulation is no longer feasible), since this behaviour is guaranteed by Theorems \ref{thm:performance_guarantees_taylor} and \ref{thm:performance_guarantees_fourier}. In Section \ref{subsec:behavioralongcurves} we analyse the behavior of the surrogates computed by Algorithms \ref{algorithm:taylor} and \ref{algorithm:fourier} along various  curves in parameter space. In Section \ref{subsec:montecarlonorm} we analyse the approximation error of the surrgoates in $L^2$-norm.

\subsection{Behavior along curves in parameter space}\label{subsec:behavioralongcurves}
Figures \ref{fig:default} and \ref{fig:special} illustrate the behavior of the surrogates computed by Algorithms \ref{algorithm:taylor} and \ref{algorithm:fourier}. 
Figure \ref{fig:default} focuses on the local behavior in the neighborhood of a grid point.
Figure \ref{fig:special} focuses on special aspects of the behaviors of the computed approximations.
Since the parameter space is $16$-dimensional, we plot the behavior along a set of curves in the latter. These are:

\begin{enumerate}
    \item The diagonal in $\mathbb{R}^m$ through grid point $0$; curve $\gamma_1\colon \mathbb{R}\to\mathbb{R}^{16}$, $t\mapsto (t,\dots ,t)$.
    \item A curve which is tangent to the first coordinate axis in $\mathbb{R}^m$; curve $\gamma_2\colon \mathbb{R}\to\mathbb{R}^{16}$, $t\mapsto\frac{\pi}{2}(\sin (t), (1-\cos (t))^2, \sin^4(t),\dots ,\sin^4(t))$. Here we expect good approximation behaviour because of Theorems \ref{thm:performance_guarantees_taylor} and \ref{thm:performance_guarantees_fourier}. In general, the higher the order of tangency to certain subspaces of $\mathbb{R}^m$ spanned by coordinate axes, the better the expected approximation behavior.
    \item A curve with constant $1$-norm in $\mathbb{R}^m$. We look at this because the error estimates in Theorem \ref{thm:performance_guarantees_taylor} are stated in terms of the $1$-norm. Curve $\gamma_3\colon [-1,1]\to\mathbb{R}^{16}$, $t\mapsto\frac{4\pi}{5} \left(\frac{t+1}{30},\dots , \frac{t+1}{30}, 1-\frac{t+1}{2}\right)$.
    \item The curve $\gamma_4\colon [-1,1]\to\mathbb{R}^{16}$, $t\mapsto\frac{2\pi}{5} \left(\frac{t+1}{30},\dots , \frac{t+1}{30}, 1-\frac{t+1}{2}\right)$.
    \item A curve that lies in a $4$-dimensional subspace of parameter space spanned by four of the coordinate axes, $\gamma_5\colon\mathbb{R}\to\mathbb{R}^{16}$, $t\mapsto (t,t,t,t,0,\dots ,0)$. The fifth curve is used exclusively in the third row of Figure \ref{fig:special} to hint at the global behavior of the surrogate computed by Algorithm \ref{algorithm:fourier}.
\end{enumerate}

In Figure \ref{fig:default} we plot the approximations computed by Algorithms \ref{algorithm:taylor} and \ref{algorithm:fourier} along the first three of these curves for parameter values $L=1,2,4$ and $L=1,2,3$, respectively. For Algorithm \ref{algorithm:taylor} we omit $L=3$ since there are no Taylor terms of order $3$.
The rows $1$ to $3$ of the figure show the behavior along curves $\gamma_1$ to $\gamma_3$.

In Figure \ref{fig:special} we focus on the specific properties of both algorithms. 
The subfigures on the left (Algorithm \ref{algorithm:taylor}) contain only one plot for $L=4$, displaying the local behavior along curves $\gamma_1$, $\gamma_2$ and $\gamma_4$ (in rows $1,2,3$, respectively). The plots are underlayed with the intervals given by the performance guarantees, see Theorem \ref{thm:performance_guarantees_taylor}. 

For Algorithm \ref{algorithm:fourier} we try to give an idea of the surrogate's global behavior. As such, the set $\{p_1 ,\dots ,p_D\}$ featuring in Algorithm \ref{algorithm:fourier} was enriched by the corresponding points for the grid point $(\pi /2 ,\dots ,\pi /2 )$.
In rows $1$ and $2$ we plot the behavior along the curves $\gamma_1$ and $\gamma_2$, respectively. Unlike in Figure \ref{fig:default}, the values corresponding to a second grid point $(\pi /2 ,\dots ,\pi /2)$ are shown in the figure. In the third row we plot the behavior in a $4$-dimensional subspace spanned by coordinate axes, using $\gamma_5$. While we restrict ourselves to a low-dimensional subspace of parameter space, the plot still hints at the global behavior of the surrogate insofar as the curve moves quite far away from the initial grid point $(0,\dots,0)$. The good approximation quality is expected in light of \ref{thm:performance_guarantees_fourier_coincides_on_I_L} in Theorem \ref{thm:performance_guarantees_fourier}.

The plots featuring in Figures \ref{fig:default} and \ref{fig:special} clearly mirror the performance guarantees from Theorems \ref{thm:performance_guarantees_taylor} and \ref{thm:performance_guarantees_fourier}.

\begin{figure}[htp]
	\centering
\subcaptionbox{Algorithm \ref{algorithm:taylor}, plot along curve $\gamma_1$\label{fig:default_taylor_diagonal}}{\includegraphics[width=2.3in]{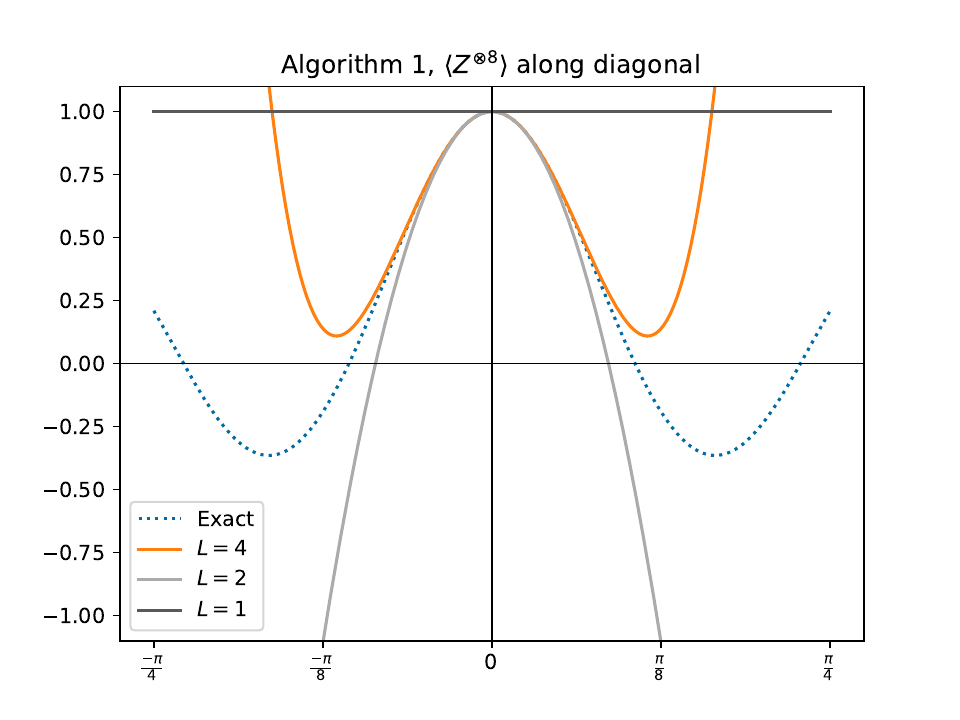}}\hspace{1em}%
\subcaptionbox{Algorithm \ref{algorithm:fourier}, plot along curve $\gamma_1$\label{fig:default_fourier_diagonal}}{\includegraphics[width=2.3in]{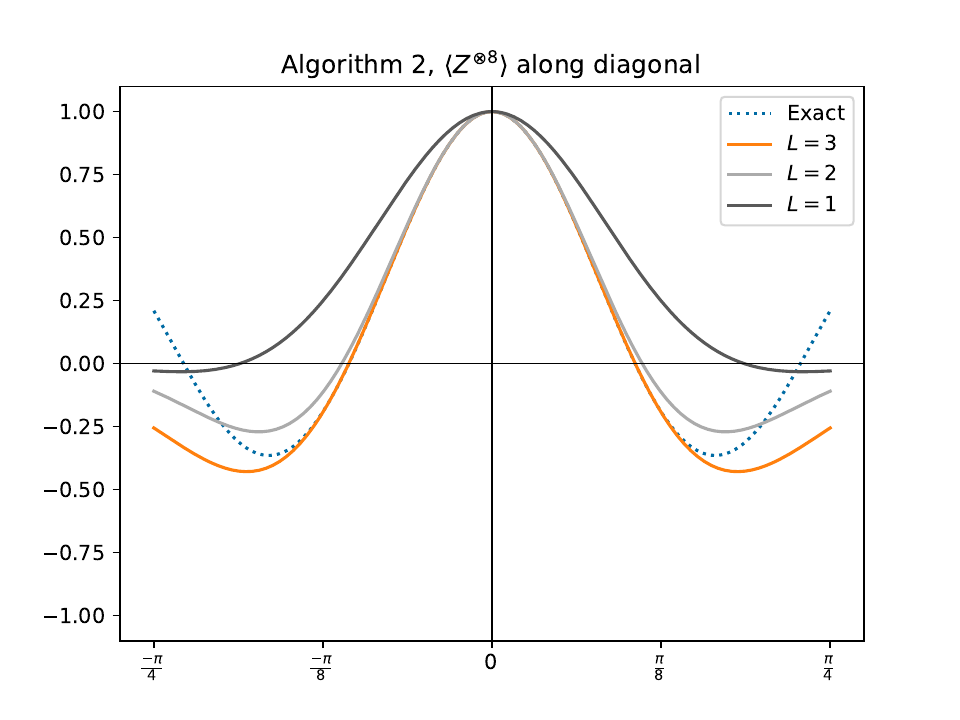}}
	
	\bigskip
	
		\centering
	\subcaptionbox{Algorithm \ref{algorithm:taylor}, plot along curve $\gamma_2$\label{fig:default_taylor_tangent}}{\includegraphics[width=2.3in]{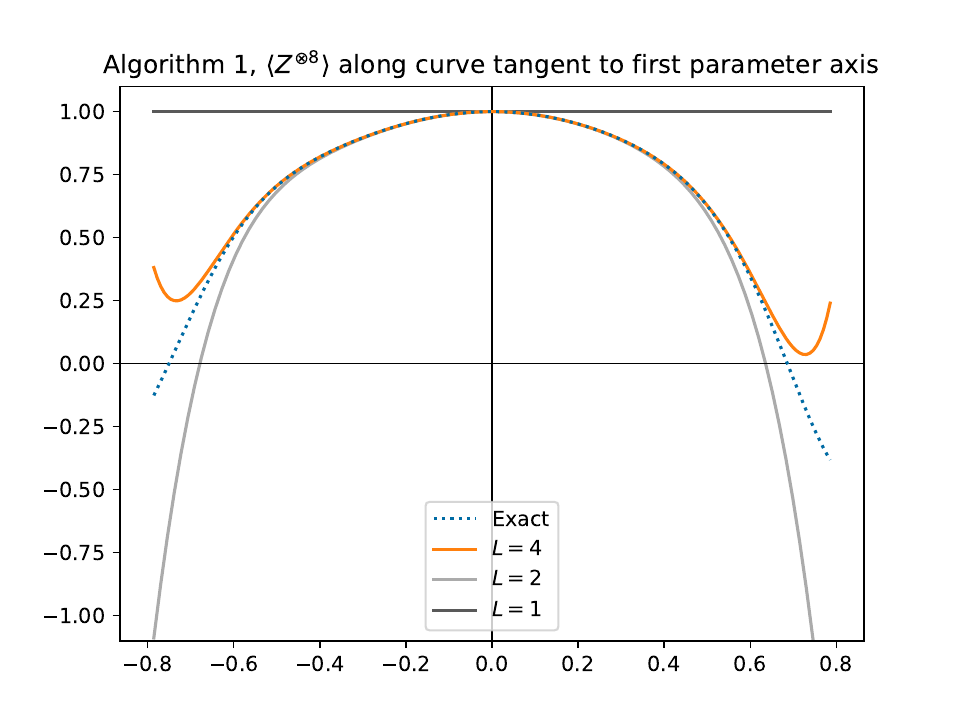}}\hspace{1em}%
	\subcaptionbox{Algorithm \ref{algorithm:fourier}, plot along curve $\gamma_2$\label{fig:deafult_fourier_tangent}}{\includegraphics[width=2.3in]{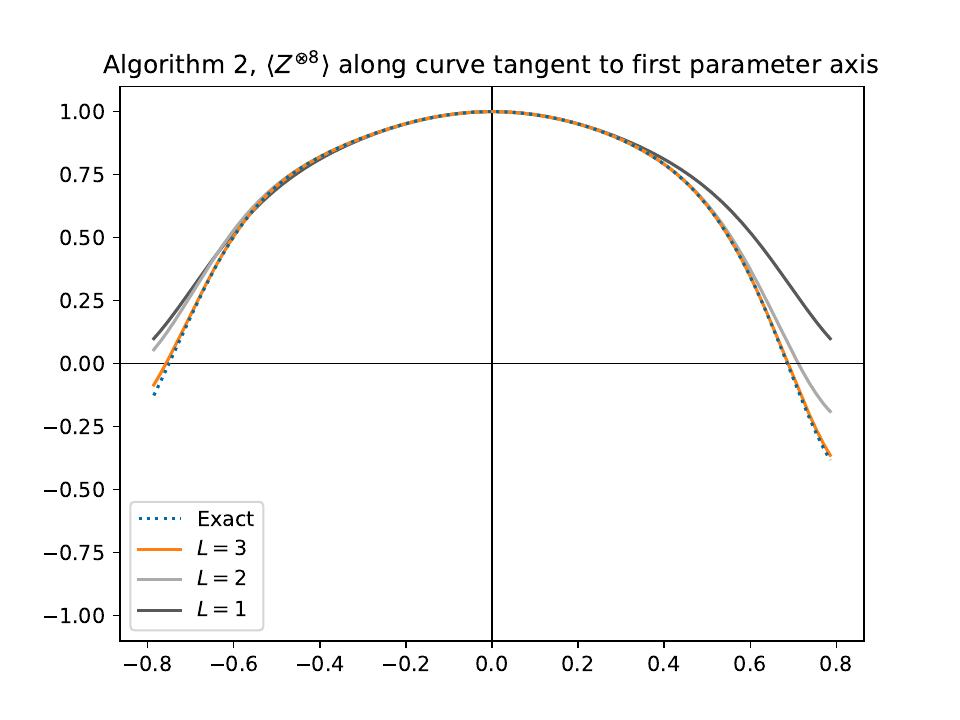}}
	
	\bigskip
	
	\centering
	\subcaptionbox{Algorithm \ref{algorithm:taylor}, plot along curve $\gamma_3$\label{fig:default_taylor_norm}}{\includegraphics[width=2.3in]{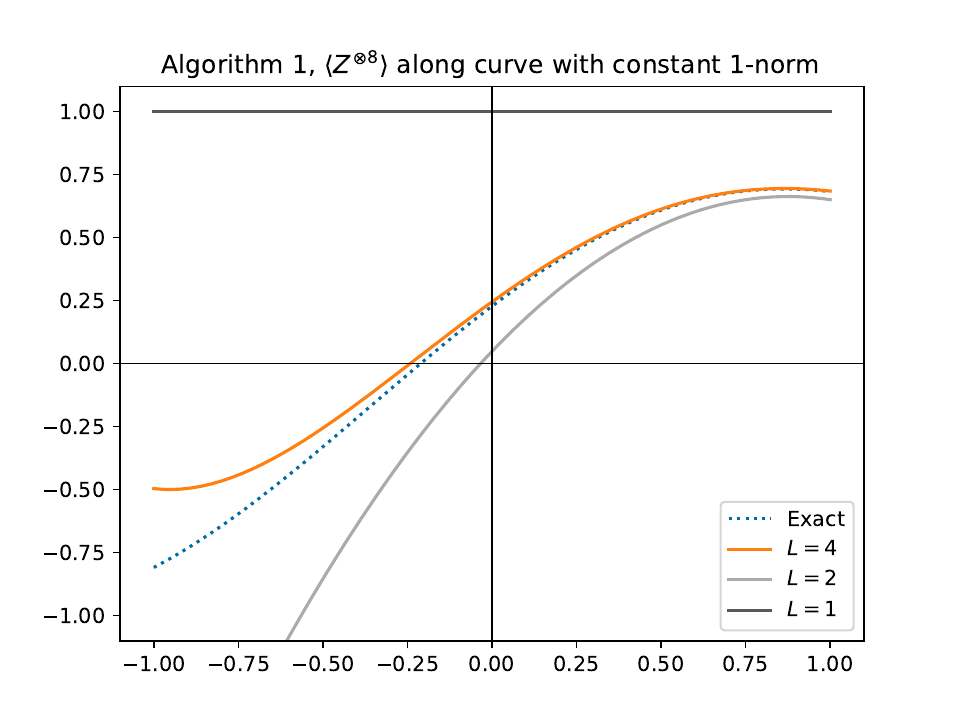}}\hspace{1em}%
	\subcaptionbox{Algorithm \ref{algorithm:fourier}, plot along curve $\gamma_3$\label{fig:default_fourier_norm}}{\includegraphics[width=2.3in]{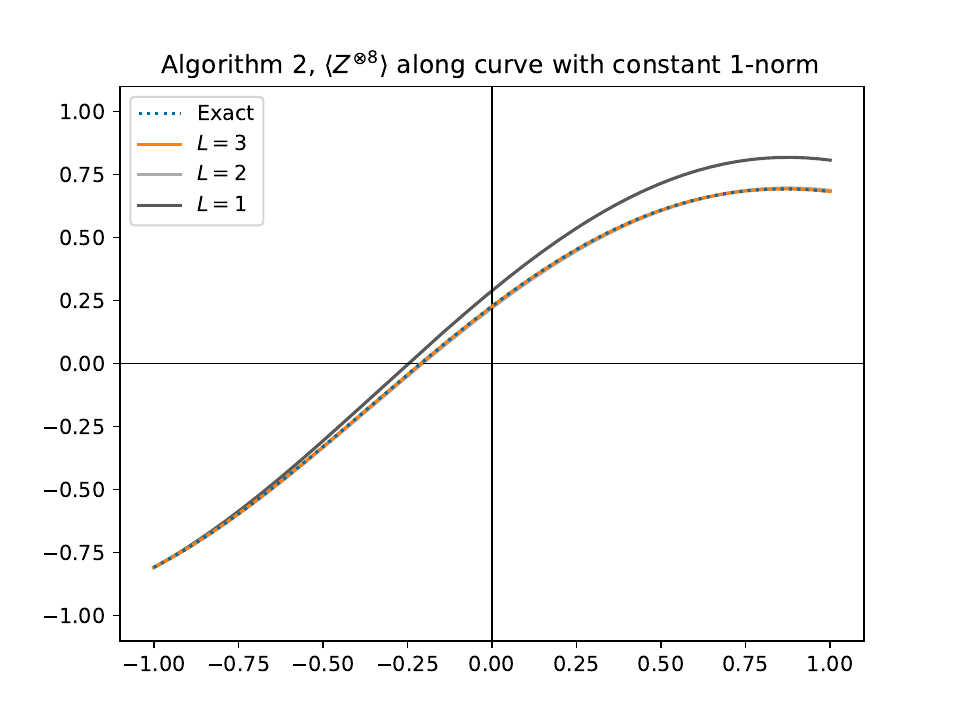}}
	\caption{These plots show the function $f$ along several curves, where $f$ is defined as in Section \ref{sec:algorithms} with respect to the circuit described in Section \ref{sec:experiments}, with $n=8$, $d=2$, and the observable $Z^{\otimes 8}$. The plots also show the approximation $\tilde{f}$ obtained by the respective algorithm for various values of $L$. The plots show the behavior close to the grid point $0\in\mathbb{R}^{16}$.}
	\label{fig:default}
\end{figure}
	
	\begin{figure}[htp]
		\centering
		\subcaptionbox{Algorithm \ref{algorithm:taylor}, plot along curve $\gamma_1$\label{fig:special_taylor_diagonal}}{\includegraphics[width=2.3in]{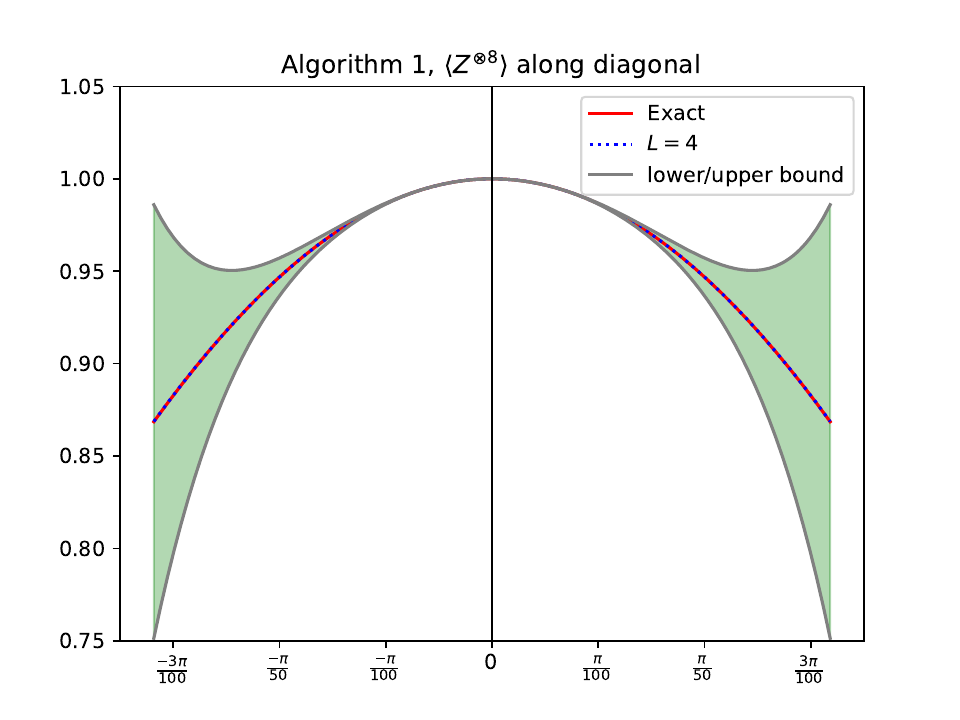}}\hspace{1em}%
		\subcaptionbox{Algorithm \ref{algorithm:fourier}, plot along curve $\gamma_1$\label{fig:special_fourier_diagonal}}{\includegraphics[width=2.3in]{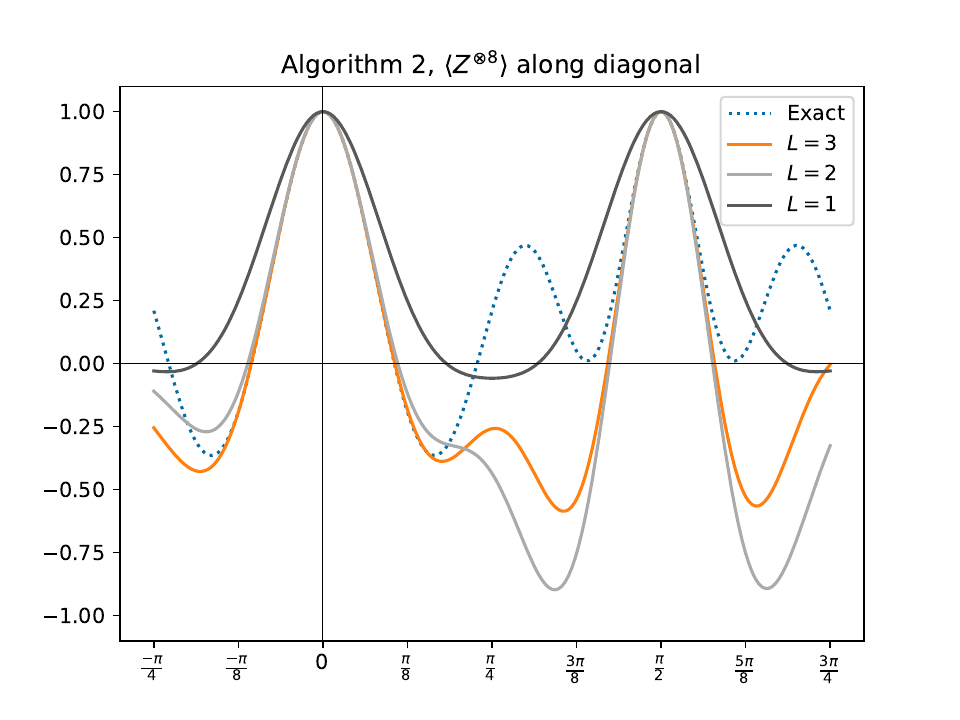}}
		
		\bigskip
		
		\centering
		\subcaptionbox{Algorithm \ref{algorithm:taylor}, plot along curve $\gamma_2$\label{fig:special_taylor_tangent}}{\includegraphics[width=2.3in]{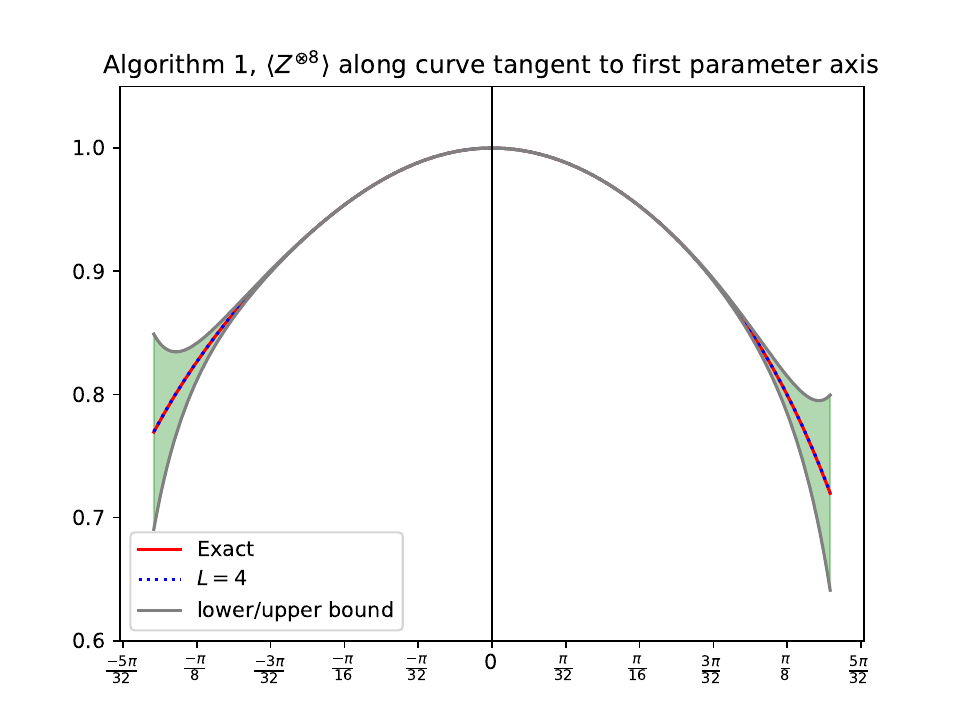}}\hspace{1em}%
		\subcaptionbox{Algorithm \ref{algorithm:fourier}, plot along curve $\gamma_2$\label{fig:special_fourier_tangent}}{\includegraphics[width=2.3in]{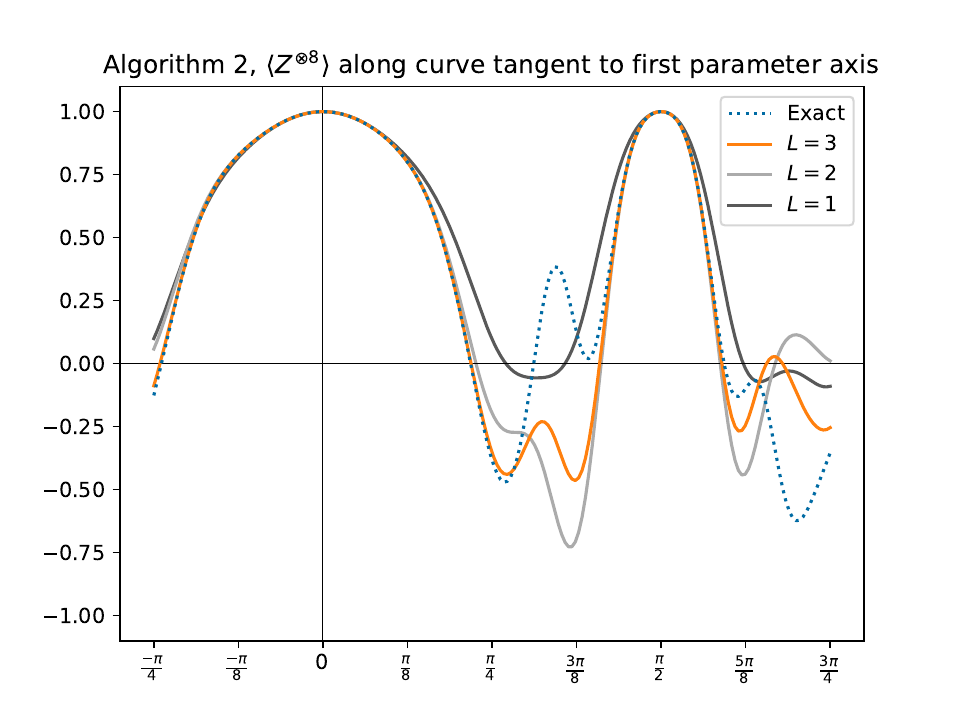}}
		
		\bigskip
		
		\centering
		\subcaptionbox{Algorithm \ref{algorithm:taylor}, plot along curve $\gamma_4$\label{fig:special_taylor_norm}}{\includegraphics[width=2.3in]{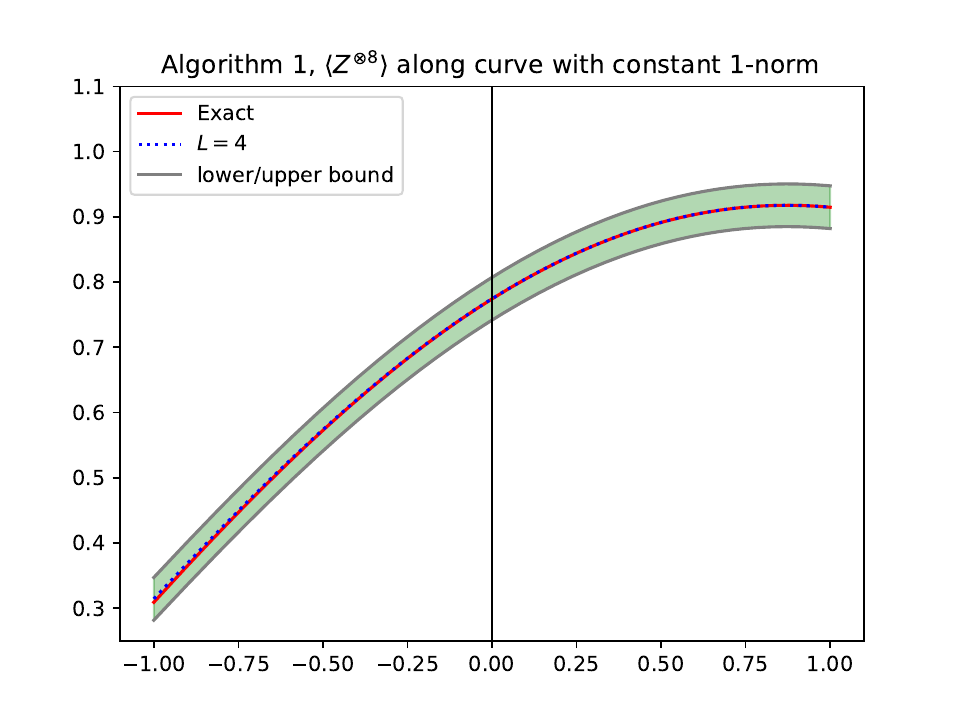}}\hspace{1em}%
		\subcaptionbox{Algorithm \ref{algorithm:fourier}, plot along curve $\gamma_5$\label{fig:special_fourier_subspace}}{\includegraphics[width=2.3in]{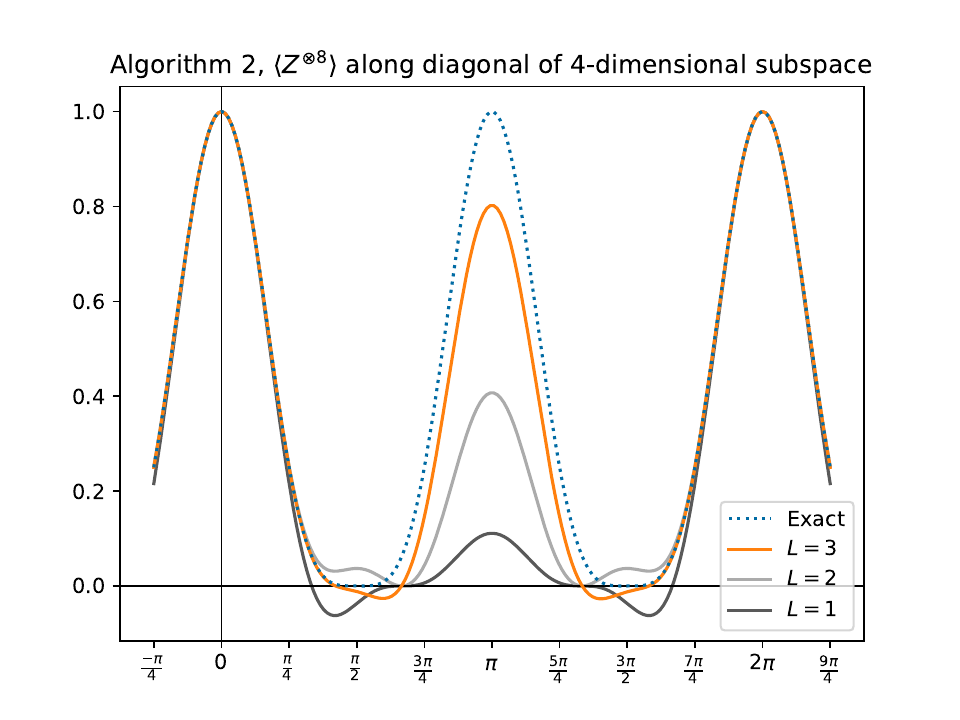}}
		\caption{These plots show the function $f$ along several curves, where $f$ is defined as in Section \ref{sec:algorithms} with respect to the circuit described in Section \ref{sec:experiments}, with $n=8$, $d=2$, and the observable $Z^{\otimes 8}$. The plots also show the approximation $\tilde{f}$ obtained by the respective algorithm for various values of $L$. For Algorithm \ref{algorithm:taylor}} the plots show the behavior close to the grid point $0\in\mathbb{R}^{16}$ as well as the error estimates guaranteed by Theorem \ref{thm:performance_guarantees_taylor}. For Algorithm \ref{algorithm:fourier} the global behavior of the approximation is shown.
		\label{fig:special}
	\end{figure}
	
	\subsection{Approximation error in $L^2$-norm}\label{subsec:montecarlonorm}
	Here, we analyse the {\emph{relative}} approximation error (measured in $L^2$-norm) of the surrogates computed by Algorithms \ref{algorithm:taylor} and \ref{algorithm:fourier} for orders $L=1,2,4$ and $L=1,2,3$, respectively: We determine the quantities
	\begin{align*}
		\frac{\Vert f-\tilde{f}\Vert_{L^2([-\pi /k,\pi /k]^{16})} }{ \Vert f\Vert_{L^2([-\pi /k,\pi /k]^{16})}}
	\end{align*}
	for $k=1,2,4,8$, where $f$ is defined as in Section \ref{sec:algorithms} with respect to the circuit described in the beginning of Section \ref{sec:experiments} and $\tilde{f}$ is an approximation computed by one of the algorithms. All occurring integrals were approximated using Monte Carlo integration with 300000 points sampled uniformly at random (for ap\-pro\-xi\-ma\-ting $\Vert f\Vert_{L^2([-\pi /k,\pi /k]^{16})}^2$) resp.\ 100000 points sampled uniformly at random (for approximating $\Vert f-\tilde{f}\Vert_{L^2([-\pi /k,\pi /k]^{16})}^2$). In the respective computations of the sample means, the (estimated) standard error of the mean never exceeded $2.1\%$ of the sample mean.
	
	We get that $\Vert f\Vert_{L^2([-\pi /k,\pi /k]^{16})}$ is approximately $2.87 \cdot 10^5$, $1.12\cdot 10^3$, $9.65\cdot 10^0$, $9.69\cdot 10^{-2}$, for $k=1,2,4,8$, respectively. The relative approximation error is shown in Table \ref{table:error}.
	
	For $k=1$, the domain of integration $[-\pi ,\pi]^{16}$ captures the global behaviour of $f$ and $\tilde{f}$ (due to periodicity). Since the order $L$ is always significantly smaller than $16$ (the dimension of the parameter space), one cannot expect good global approximation behaviour. As the value of $k$ increases and the domain of integration shrinks, one expects better and better approximation behaviour in light of the performance guarantees from Theorems \ref{thm:performance_guarantees_taylor} and \ref{thm:performance_guarantees_fourier} (which guarantee good behaviour locally around $0\in\mathbb{R}^{16}$). 
 Table \ref{table:error} clearly illustrates this expected behavior.

	\begin{table}[h]
		\centering
		\begin{tabular}{c|cccccc}
			\toprule
			\multicolumn{1}{c}{} & \multicolumn{3}{c}{\textbf{Algorithm \ref{algorithm:taylor}}} & \multicolumn{3}{c}{\textbf{Algorithm \ref{algorithm:fourier}}} \\
			\cmidrule(rl){2-4} \cmidrule(rl){5-7}
			\textbf{Domain}\\ $[-\pi /k,\pi /k]^{16}$ & {$L=1$} & {$L=2$} & {$L=4$} & {$L=1$} & {$L=2$} & {$L=3$} \\
			\midrule
			$k=1$ &\scriptsize $8.5\cdot 10^0$ &\scriptsize $2.3\cdot 10^2$ &\scriptsize $3.0\cdot 10^3$ &\scriptsize $1.0\cdot 10^0$ &\scriptsize $1.0\cdot 10^0$  &\scriptsize $1.0\cdot 10^0$\\
			$k=2$ &\scriptsize $8.5\cdot 10^0$ &\scriptsize $5.1\cdot 10^1$ &\scriptsize $1.5\cdot 10^2$ &\scriptsize $1.0\cdot 10^0$ &\scriptsize $9.8\cdot 10^{-1}$ &\scriptsize $1.0\cdot 10^0$\\
			$k=4$ &\scriptsize $3.2\cdot 10^0$ &\scriptsize $3.5\cdot 10^0$ &\scriptsize $2.2\cdot 10^0$ &\scriptsize $6.6\cdot 10^{-1}$ &\scriptsize $1.6\cdot 10^{-1}$ &\scriptsize $1.1\cdot 10^{-1}$\\
			$k=8$ &\scriptsize $5.3\cdot 10^{-1}$ &\scriptsize $1.2\cdot 10^{-1}$ &\scriptsize $1.8\cdot 10^{-2}$ &\scriptsize $1.2\cdot 10^{-1}$ &\scriptsize $8.2\cdot 10^{-3}$ &\scriptsize $1.6\cdot 10^{-3}$\\
			\bottomrule
		\end{tabular}
		\caption{The entries of the table show the relative approximation error ${\Vert f-\tilde{f}\Vert_{L^2([-\pi /k,\pi /k]^{16})} }/{ \Vert f\Vert_{L^2([-\pi /k,\pi /k]^{16})}}$, where $f$ is defined as in Section \ref{sec:algorithms} with respect to the circuit described in the beginning of Section \ref{sec:experiments}. Here, the rows $k=1,2,4,8$ of the table correspond to the domain of integration $[-\pi /k,\pi /k]^{16}$. The column headers describe which combination of algorithm and order $L$ was used to obtain the approximation $\tilde{f}$. All occurring integrals were approximated using Monte Carlo integration; the details are laid out in Section \ref{subsec:montecarlonorm}.}
		\label{table:error}
	\end{table}

	\section{Conclusion}\label{sec:conclusion}
	
	In this article we described two algorithms for obtaining classical surrogates of parametrized quantum circuits, i.e., for approximating the expected value of some observable with respect to some state computed by a given quantum circuit in dependence of the parameters of the circuit. Our algorithms are not granted whitebox access to the circuit, but instead exclusively make use of blackbox evaluations, which may either be simulated or implemented on actual quantum hardware.

 We proved performance guarantees and described two practical scenarios in which these guarantees are relevant. Finally, we conducted experiments on some representative, but not large-scale, problem instances, highlighting the approximation qualities of the algorithms along various trajectories through the parameter space. 

In the appendix below we provide a more thorough mathematical treatment of the topic, for the sake of providing a rigorous theoretical foundation of our results.

	\bibliographystyle{alphaurl}
	\bibliography{literature/bibliography}

	\begin{appendices}
	Here we provide the theoretical background for the algorithms discussed in Section \ref{sec:algorithms} and prove Theorems \ref{thm:performance_guarantees_taylor} and \ref{thm:performance_guarantees_fourier}.
	
	\section{Theoretical background}
	\label{appendix:theoretical_background}
	
	We fix a positive integer $m$ and consider the $\mathbb{R}$-vector space of real-valued functions
\begin{align*}
		H :=\Bigg\{
		& g(z)=\sum_{\omega\in\{-1,0,1\}^m} c_\omega e^{i\omega^T z}
		\,\Bigg\vert\, \overline{c_{-\omega}} = c_\omega\in \mathbb{C}\text{ for all }\omega \Bigg\}\subset\text{Maps}(\mathbb{R}^m,\mathbb{R}).
	\end{align*}	
	By restricting each element of $H$ to $[-\pi ,\pi ]^m$, we see that $H$ is isomorphic to a finite-dimensional $\mathbb{R}$-vector subspace of $L^2 ([-\pi ,\pi ]^m ;\mathbb{R})$. Hence $(H,\langle\cdot ,\cdot\rangle_H )$ is an $\mathbb{R}$-Hilbert space with induced inner product
	\begin{align*}
		\langle g_1 ,g_2\rangle_H= \int_{[-\pi ,\pi]^m} g_1 (z) g_2 (z) \mathrm{d}z, \ g_1,g_2\in H.
	\end{align*}
	Obviously, every $g\in H$ is a real-analytic function on $\mathbb{R}^m$. Moreover, the power series expansion of $g$ around any point in $\mathbb{R}^m$ converges to $g$ on all of $\mathbb{R}^m$.
	
	We now define $K\colon \mathbb{R}^m\times \mathbb{R}^m \to \mathbb{R}$ as
	\begin{align*}
		K(x,z):=\frac{1}{(2\pi )^m} \sum_{\omega\in\{-1,0,1\}^m} e^{i\omega^T(x-z)}, \ x,z\in \mathbb{R}^m,
	\end{align*}
	and for fixed $x\in\mathbb{R}^m$ we write $K_x := K(x,\cdot )$. In particular, we have $K_x\in H$ for all $x\in\mathbb{R}^m$. Note that $K$ is closely related to the multivariate Dirichlet kernel. The relevance of $K$ for our purposes lies in the following lemma:
	
	\begin{lemma}
		$(H,\langle\cdot ,\cdot\rangle_H )$ is a reproducing kernel Hilbert space and $K$ is its uniquely determined reproducing kernel. Furthermore, we have
		\begin{align*}
			K(x,z) = \frac{1}{(2\pi )^m} \prod_{j=1}^m (1+2\cos(x_j - z_j ))
		\end{align*}
		for all $x,z\in \mathbb{R}^m$.
	\end{lemma}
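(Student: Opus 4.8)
The plan is to prove the two assertions in turn: first that $K$ reproduces point evaluation, which together with finite-dimensionality pins down $K$ as \emph{the} reproducing kernel, and then the product formula, which is a direct factorization of the defining sum.

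First I would record the structural facts that make ``reproducing kernel Hilbert space'' literally applicable. The space $H$ is finite-dimensional, being the real-linear span of the finitely many functions $z\mapsto e^{i\omega^T z}$, $\omega\in\{-1,0,1\}^m$, subject to the reality condition $\overline{c_{-\omega}}=c_\omega$. Restricting to $[-\pi,\pi]^m$ identifies $\langle\cdot,\cdot\rangle_H$ with the restriction of the $L^2([-\pi,\pi]^m)$ inner product, so it is genuinely positive definite (the exponentials are $L^2$-orthogonal, hence linearly independent). On a finite-dimensional inner product space every evaluation functional $g\mapsto g(x)$ is automatically continuous, so $H$ is a reproducing kernel Hilbert space whose reproducing kernel is uniquely determined. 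It therefore suffices to verify that the given $K$ has the reproducing property, and $K_x\in H$ is already noted in the excerpt.

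The key computation is the reproducing identity $\langle g,K_x\rangle_H=g(x)$ for all $g\in H$ and $x\in\mathbb{R}^m$. Writing $g(z)=\sum_\omega c_\omega e^{i\omega^T z}$ and expanding $K_x(z)=\frac{1}{(2\pi)^m}\sum_\nu e^{i\nu^T x}e^{-i\nu^T z}$, I would interchange the (finite) sums with the integral and invoke the orthogonality relations
\[
\int_{[-\pi,\pi]^m} e^{i(\omega-\nu)^T z}\,\mathrm{d}z=\prod_{j=1}^m\int_{-\pi}^{\pi} e^{i(\omega_j-\nu_j)z_j}\,\mathrm{d}z_j=(2\pi)^m\,\delta_{\omega,\nu},
\]
which hold because each difference $\omega_j-\nu_j$ lies in $\{-2,-1,0,1,2\}$ and the one-dimensional integral of $e^{ikz}$ over $[-\pi,\pi]$ vanishes for every nonzero integer $k$. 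Only the diagonal terms $\nu=\omega$ survive, and after cancelling the factors $(2\pi)^{\pm m}$ one obtains $\langle g,K_x\rangle_H=\sum_\omega c_\omega e^{i\omega^T x}=g(x)$. By the uniqueness just noted, this establishes the first assertion.

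For the product formula I would simply factor the defining sum over coordinate directions,
\[
\sum_{\omega\in\{-1,0,1\}^m} e^{i\omega^T(x-z)}=\prod_{j=1}^m\Big(\sum_{\omega_j\in\{-1,0,1\}} e^{i\omega_j(x_j-z_j)}\Big),
\]
and evaluate each one-dimensional factor as $e^{-i(x_j-z_j)}+1+e^{i(x_j-z_j)}=1+2\cos(x_j-z_j)$; dividing by $(2\pi)^m$ gives the claimed expression. There is no serious obstacle in any of this: the computation is routine, and the only points genuinely requiring care are the positive-definiteness of $\langle\cdot,\cdot\rangle_H$ via the $L^2$-embedding (so that the RKHS framework applies and uniqueness is available) and keeping track of precisely which exponentials survive the orthogonality relation, so that the surviving coefficient comes out as $c_\omega$ and the identity reproduces $g(x)$ exactly.
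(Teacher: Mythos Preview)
Your proposal is correct and follows essentially the same approach as the paper: establish the reproducing property $\langle g,K_x\rangle_H=g(x)$ via the orthogonality relations for the exponentials, invoke uniqueness of the reproducing kernel, and then factor the defining sum coordinate-wise to obtain the product formula. The only cosmetic difference is that you argue boundedness of evaluation functionals from finite-dimensionality before verifying the reproducing identity, whereas the paper deduces it afterward via Cauchy--Schwarz; the substance is identical.
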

	\begin{proof}
		Using that, for all $\omega ,\mu\in\{-1,0,1\}^m$, we have
		\begin{align*}
			\int_{[-\pi ,\pi]^m}  e^{i(\omega -\mu)^T z} \mathrm{d}z =
			\begin{cases}
				(2\pi )^m & \text{if }\omega =\mu ,\\
				0 & \text{if }\omega \neq\mu ,\\
			\end{cases}
		\end{align*}
		a direct calculation shows that $g(x)=\langle g,K_x\rangle_H$ for all $g\in H,x\in \mathbb{R}^m$. In particular, by virtue of the Cauchy-Schwarz inequality, the evaluation functional at $x$ is a bounded linear operator on $H$ for all $x\in\mathbb{R}^m$, i.e., $H$ is a reproducing kernel Hilbert space. Uniqueness in the Riesz representation theorem then implies that $K$ is the uniquely determined reproducing kernel for $H$. It remains to prove the product formula for $K$, which again follows from an easy calculation. Indeed, we have
		\begin{align*}
			K(x,z)
			& =  \frac{1}{(2\pi )^m} \sum_{\omega\in\{-1,0,1\}^m} e^{i\omega^T(x-z)}\\
			& = \frac{1}{(2\pi )^m} \prod_{j=1}^m\left(\sum_{\omega_j\in\{-1,0,1\}} e^{i\omega_j  (x_j -z_j )}\right)\\
			& = \frac{1}{(2\pi )^m} \prod_{j=1}^m\left(e^{-i (x_j -z_j )} + 1 + e^{i(x_j -z_j )}\right)\\
			& = \frac{1}{(2\pi )^m} \prod_{j=1}^m (1+2\cos(x_j - z_j ))
		\end{align*}
        for $x,z\in\mathbb{R}^m$.
	\end{proof}

	\begin{notation}
		We introduce some notation:
		\begin{itemize}
			\label{notation:shiftvectors}
			\item Let $\mathbb{Z}/4\mathbb{Z}$ denote the ring of integers modulo $4$ and, for $x\in\mathbb{Z}$, denote its residue class as $\overline{x}$. We now fix the complete residue system $S:=\{0,1,2,3\}$ modulo $4$, i.e., for any $x\in\mathbb{Z}$ there is a unique $y\in S$ satisfying $\overline{x}=\overline{y}$. In the following, we denote $y$ as $x\bmod 4$.
			\item Let $\alpha\in (\mathbb{Z}_{\geq 0})^m$ be a multiindex with $|\alpha |:=\alpha_1 + \dots + \alpha_m=k$ and let $\mathfrak{i}\in\{-1,1\}^{k}$ be a tuple of length $|\alpha |$ with entries in $\{-1,1\}$. Using multiindex notation, the product of the elements of $\mathfrak{i}$ can be written as $\mathfrak{i}^{(1,\dots ,1)}$. Furthermore, given that $k=\alpha_1 + \dots + \alpha_m$, we can index the entries of $\mathfrak{i}$ as
			\begin{align*}
				\mathfrak{i} = \left(\mathfrak{i}_{1,1},\dots ,\mathfrak{i}_{1,\alpha_1},\dots ,\mathfrak{i}_{m,1},\dots ,\mathfrak{i}_{m,\alpha_m}\right).
			\end{align*}
			We then define $p_{\alpha ,\mathfrak{i}}\in \frac{\pi}{2}S^m\subseteq\mathbb{R}^m$ as
			\begin{align*}
				p_{\alpha ,\mathfrak{i}} = \frac{\pi}{2}\cdot\left((\mathfrak{i}_{1,1}+\dots +\mathfrak{i}_{1,\alpha_1})\bmod 4,\dots ,(\mathfrak{i}_{m,1}+\dots +\mathfrak{i}_{m,\alpha_m})\bmod 4\right).
			\end{align*}
			Note that it is possible that $\alpha_j =0$ for some $j\in\{1,\dots ,m\}$, in which case $\left(\mathfrak{i}_{j,1},\dots ,\mathfrak{i}_{j,\alpha_j}\right)$ is the empty tuple and $\mathfrak{i}_{j,1}+\dots +\mathfrak{i}_{j,\alpha_j}$ is the empty sum, which is $0$ by convention.
		\end{itemize}
	\end{notation}

	\begin{lemma}
		\label{lemma:parametershift}
		Let $g\in H$. For all $\alpha\in (\mathbb{Z}_{\geq 0})^m$, $x\in \mathbb{R}^m$ (see Notation \ref{notation:shiftvectors}) we then have
		\begin{align*}
			D^\alpha g (x) & = \frac{1}{2^{|\alpha |}} \sum_{\mathfrak{i}\in\{-1,1\}^{|\alpha |}} \mathfrak{i}^{(1,\dots ,1)} g(x+p_{\alpha ,\mathfrak{i}}) .
		\end{align*}
		In particular, with $\alpha!:=\prod\alpha_i!$, we have
		\begin{align*}
			g(x) = \sum_{k=0}^{\infty}  \sum_{\alpha\in (\mathbb{Z}_{\geq 0})^m\colon |\alpha |=k}\frac{x^\alpha}{\alpha! \cdot 2^{|\alpha |}} \left( \sum_{\mathfrak{i}\in\{-1,1\}^{|\alpha |}} \mathfrak{i}^{(1,\dots ,1)} g(p_{\alpha ,\mathfrak{i}}) \right)
		\end{align*}
		for all $x\in\mathbb{R}^m$.
	\end{lemma}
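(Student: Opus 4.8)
The plan is to exploit that both sides of the claimed formula are $\mathbb{C}$-linear operators applied to $g$, so that it suffices to verify the identity on a spanning set. Since every $g\in H$ is a linear combination of the characters $z\mapsto e^{i\omega^T z}$ with $\omega\in\{-1,0,1\}^m$, I would fix one such character $g(z)=e^{i\omega^T z}$ and check the first formula for it; the general case then follows by linearity. For this character the left-hand side is immediate, namely
\[
 D^\alpha g(x) = \left(\prod_{j=1}^m (i\omega_j)^{\alpha_j}\right) e^{i\omega^T x}.
\]

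For the right-hand side the first step is to remove the ``$\bmod 4$'' in $p_{\alpha,\mathfrak{i}}$ (see Notation \ref{notation:shiftvectors}): since each $\omega_j$ is an integer, replacing $\mathfrak{i}_{j,1}+\dots+\mathfrak{i}_{j,\alpha_j}$ by its representative modulo $4$ alters the $j$-th coordinate of the shift by an integer multiple of $\frac{\pi}{2}\cdot 4 = 2\pi$, and hence does not change $g(x+p_{\alpha,\mathfrak{i}})=e^{i\omega^T(x+p_{\alpha,\mathfrak{i}})}$ (this is exactly the $2\pi$-periodicity of $g$ in each variable). After this reduction the exponent reads $\omega^T x + \frac{\pi}{2}\sum_{j,k}\omega_j\mathfrak{i}_{j,k}$, so the sum over $\mathfrak{i}\in\{-1,1\}^{|\alpha|}$ factorizes over the index pairs $(j,k)$ and one is left with the crucial one-variable computation
\[
 \sum_{s\in\{-1,1\}} s\, e^{i\frac{\pi}{2}\omega_j s} = e^{i\frac{\pi}{2}\omega_j} - e^{-i\frac{\pi}{2}\omega_j} = 2i\sin\!\left(\tfrac{\pi}{2}\omega_j\right) = 2i\omega_j,
\]
where the last equality uses $\sin(\frac{\pi}{2}\omega_j)=\omega_j$ for $\omega_j\in\{-1,0,1\}$. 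This is precisely the point where the restriction of the frequencies to $\{-1,0,1\}$ (equivalently, the assumption that the generators $G_j$ have spectrum $\{-1,1\}$) enters. Collecting the factors gives $\frac{e^{i\omega^T x}}{2^{|\alpha|}}\prod_j (2i\omega_j)^{\alpha_j} = e^{i\omega^T x}\prod_j (i\omega_j)^{\alpha_j}$, which agrees with the left-hand side and proves the first formula.

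For the ``in particular'' statement I would invoke the fact recorded just before this lemma, that every $g\in H$ is real-analytic with Taylor series about $0$ converging to $g$ on all of $\mathbb{R}^m$, i.e.\ $g(x)=\sum_{k=0}^{\infty}\sum_{|\alpha|=k}\frac{x^\alpha}{\alpha!}D^\alpha g(0)$. Substituting the first formula at $x=0$, namely $D^\alpha g(0)=\frac{1}{2^{|\alpha|}}\sum_{\mathfrak{i}}\mathfrak{i}^{(1,\dots,1)}g(p_{\alpha,\mathfrak{i}})$, into this expansion yields the second displayed identity.

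The bulk of the argument is routine; the only points requiring care are the factorization of the sum over $\mathfrak{i}$, which hinges on correctly reading off the index structure $\mathfrak{i}=(\mathfrak{i}_{j,k})$ from Notation \ref{notation:shiftvectors}, and the periodicity argument that discards the $\bmod 4$. I expect the main (though still minor) obstacle to be presenting this bookkeeping cleanly rather than any genuine mathematical difficulty, the single nontrivial input being the identity $\sin(\frac{\pi}{2}\omega_j)=\omega_j$ on $\{-1,0,1\}$.
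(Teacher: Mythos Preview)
Your proof is correct. The paper's own argument is terse: it states that the first identity ``follows easily from direct computation using induction on $|\alpha|$'' and omits the details, while your approach instead reduces by linearity to the characters $g(z)=e^{i\omega^T z}$ and verifies the identity there in one shot. Both routes are short; the inductive approach tacitly relies on the one-variable parameter-shift rule $\partial_j g(x)=\tfrac{1}{2}\bigl(g(x+\tfrac{\pi}{2}e_j)-g(x-\tfrac{\pi}{2}e_j)\bigr)$ and iterates, whereas your approach makes the role of the frequency restriction $\omega_j\in\{-1,0,1\}$ completely explicit through $\sin(\tfrac{\pi}{2}\omega_j)=\omega_j$. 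One small point of presentation: the characters $e^{i\omega^T z}$ are not themselves elements of the real space $H$, so when you invoke linearity you are implicitly working in the complex span; this is harmless since both sides of the identity extend $\mathbb{C}$-linearly, but it would be cleaner to say so. The treatment of the ``in particular'' part is identical to the paper's.
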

	\begin{remark}
		Note that the assumptions in Lemma \ref{lemma:parametershift} do not exclude the case that $\alpha =(0,\dots ,0)$. With the usual conventions for empty sums and empty products and noting that, set-theoretically, $\{-1,1\}^0 = \{\emptyset\}$, the claimed expression for $D^{(0,\dots ,0)} g(x)$ indeed reduces to $g(x)$.
	\end{remark}
	\begin{proof}
		The second equality follows from the first by expanding $g$ as a power series around $0\in\mathbb{R}^m$ and plugging in, so it suffices to prove the first equality. But the first equality follows easily from direct computation using induction on $|\alpha|$. We omit the details.
	\end{proof}

\begin{remark}
	  In the context of quantum machine learning, the first equality in Lemma \ref{lemma:parametershift} is called a \emph{parameter shift rule}, see e.g. \cite{Mitarai_2018}, \cite{Schuld_2019}, \cite{Mari_2021}, \cite{Wierichs2022generalparameter}. 
\end{remark}
	
	\begin{lemma}
		\label{lemma:taylorerror}
		Let $g\in H$, let $L\in\mathbb{Z}_{\geq 0}$ and, with the usual multiindex notation, let $T_L g\colon\mathbb{R}^m\to\mathbb{R}$,
		\begin{align*}
			(T_L g)(x) = \sum_{\alpha\in (\mathbb{Z}_{\geq 0})^m\colon |\alpha |\leq L} \frac{D^\alpha g (0)}{\alpha !}x^\alpha
			\text{ for all }x\in\mathbb{R}^m ,
		\end{align*}
		be the $L$-th order Taylor polynomial of $g$ at $0\in\mathbb{R}^m$. Then the following estimate for the approximation error holds for all $x\in\mathbb{R}^m$:
		\begin{align*}
			\left|g(x)-(T_L g)(x)\right|
			\leq \left(\exp\left(\Vert x\Vert_1\right)
			-
			\sum_{k=0}^{L} \frac{\Vert x\Vert_1^k}{k!}\right) \Vert g\Vert_\infty
			,
		\end{align*}
		where $\Vert\cdot\Vert_1$ denotes the $1$-norm on $\mathbb{R}^m$ and $\Vert\cdot\Vert_\infty$ denotes the sup norm on $H$. In particular, if $\Vert x\Vert_1\leq 1 +\frac{L}{2}$, we have
		\begin{align*}
			\left|g(x)-(T_L g)(x)\right|
			\leq
			2\frac{\Vert x\Vert_1^{L+1}}{(L+1)!} \Vert g\Vert_\infty
			.
		\end{align*}
	\end{lemma}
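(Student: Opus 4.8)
The plan is to reduce everything to a uniform bound on the derivatives $D^\alpha g(0)$ and then sum the resulting Taylor tail via the multinomial theorem. Since every $g\in H$ is real-analytic with a power series around $0$ that converges to $g$ on all of $\mathbb{R}^m$ (as noted just before the kernel lemma), the remainder is exactly the tail of this series, so I would start from
$$g(x)-(T_Lg)(x)=\sum_{\alpha\in(\mathbb{Z}_{\ge 0})^m\colon|\alpha|\ge L+1}\frac{D^\alpha g(0)}{\alpha!}x^\alpha,$$
and immediately apply the triangle inequality to get $|g(x)-(T_Lg)(x)|\le\sum_{|\alpha|\ge L+1}\frac{|D^\alpha g(0)|}{\alpha!}|x^\alpha|$.

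The key input is a uniform derivative bound. Evaluating the parameter-shift rule from Lemma \ref{lemma:parametershift} at the point $0$ gives $D^\alpha g(0)=\frac{1}{2^{|\alpha|}}\sum_{\mathfrak{i}\in\{-1,1\}^{|\alpha|}}\mathfrak{i}^{(1,\dots,1)}g(p_{\alpha,\mathfrak{i}})$; since this sum has exactly $2^{|\alpha|}$ terms, each prefactor $\mathfrak{i}^{(1,\dots,1)}$ equals $\pm1$, and $|g(p_{\alpha,\mathfrak{i}})|\le\|g\|_\infty$, I obtain $|D^\alpha g(0)|\le\|g\|_\infty$ for every multiindex $\alpha$. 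Substituting this into the remainder and grouping terms by total degree $k=|\alpha|$, the multinomial theorem yields $\sum_{|\alpha|=k}\frac{|x^\alpha|}{\alpha!}=\frac{\|x\|_1^k}{k!}$, whence $|g(x)-(T_Lg)(x)|\le\|g\|_\infty\sum_{k=L+1}^\infty\frac{\|x\|_1^k}{k!}=\left(\exp(\|x\|_1)-\sum_{k=0}^L\frac{\|x\|_1^k}{k!}\right)\|g\|_\infty$, which is the first estimate.

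For the sharpened bound under the hypothesis $\|x\|_1\le 1+\frac{L}{2}$, I would set $t:=\|x\|_1$ and factor the leading term out of the tail:
$$\sum_{k=L+1}^\infty\frac{t^k}{k!}=\frac{t^{L+1}}{(L+1)!}\sum_{j=0}^\infty\frac{t^j}{(L+2)(L+3)\cdots(L+1+j)}.$$
Each denominator is a product of $j$ factors all at least $L+2$, so bounding it below by $(L+2)^j$ dominates the inner sum by the geometric series $\sum_{j\ge 0}\bigl(t/(L+2)\bigr)^j=\bigl(1-t/(L+2)\bigr)^{-1}$, which is valid since $t<L+2$. The hypothesis $t\le 1+\frac{L}{2}=\frac{L+2}{2}$ forces $t/(L+2)\le\tfrac12$, so this geometric sum is at most $2$, giving $\sum_{k=L+1}^\infty\frac{t^k}{k!}\le 2\frac{t^{L+1}}{(L+1)!}$ and hence the claimed inequality.

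The computation is essentially routine; the only step requiring genuine care is the tail estimate in the last paragraph, where the threshold $1+\frac{L}{2}$ is precisely what makes the geometric ratio $t/(L+2)$ at most $\frac12$, so that the series is bounded by $2$. Any looser hypothesis on $\|x\|_1$ would break the factor-of-two in the final constant.
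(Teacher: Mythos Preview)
Your proof is correct and follows essentially the same route as the paper: bound $|D^\alpha g(0)|\le\|g\|_\infty$ via the parameter-shift formula of Lemma~\ref{lemma:parametershift}, then collapse the tail with the multinomial theorem to obtain the exponential remainder. The paper simply cites the second inequality as a ``well-known remainder term estimate for the exponential function,'' whereas you spell out the geometric-series argument explicitly; otherwise the arguments coincide.
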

	\begin{proof}
		The second estimate follows from the first estimate with the well-known remainder term estimates for the exponential function, so it suffices to prove the first estimate. Let $x\in\mathbb{R}^m$. Using Lemma \ref{lemma:parametershift} and the multinomial theorem, we calculate
		\begin{align*}
			\left|g(x)-(T_L g)(x)\right|
			& = \left|\sum_{k=L+1}^{\infty}\sum_{\alpha\in (\mathbb{Z}_{\geq 0})^m\colon |\alpha |=k} \frac{D^\alpha g (0)}{\alpha !}x^\alpha\right|\\
			& = \left|\sum_{k=L+1}^{\infty}\sum_{\alpha\in (\mathbb{Z}_{\geq 0})^m\colon |\alpha |=k} \frac{x^\alpha}{\alpha! \cdot 2^{k}} \left( \sum_{\mathfrak{i}\in\{-1,1\}^{k}} \mathfrak{i}^{(1,\dots ,1)} g(p_{\alpha ,\mathfrak{i}}) \right)\right|\\
			& \leq \sum_{k=L+1}^{\infty}\sum_{\alpha\in (\mathbb{Z}_{\geq 0})^m\colon |\alpha |=k} \frac{|x^\alpha |}{\alpha! \cdot 2^{k}}  \sum_{\mathfrak{i}\in\{-1,1\}^{k}} \left|\mathfrak{i}^{(1,\dots ,1)} g(p_{\alpha ,\mathfrak{i}})\right|\\
			& \leq \Vert g\Vert_\infty\sum_{k=L+1}^{\infty}\sum_{\alpha\in (\mathbb{Z}_{\geq 0})^m\colon |\alpha |=k} \frac{|x^\alpha |}{\alpha! \cdot 2^{k}}  \sum_{\mathfrak{i}\in\{-1,1\}^{k}} 1\\
			& = \Vert g\Vert_\infty\sum_{k=L+1}^{\infty}\frac{1}{k!}\sum_{\alpha\in (\mathbb{Z}_{\geq 0})^m\colon |\alpha |=k}k! \frac{|x_1|^{\alpha_1}\cdots |x_m|^{\alpha_m}}{\alpha_1!\cdots \alpha_m!}\\
			& = \Vert g\Vert_\infty\sum_{k=L+1}^{\infty}\frac{1}{k!}(|x_1 |+\dots +|x_m|)^k .
		\end{align*}
		The claim follows.
	\end{proof}
	
	\begin{notation}
		\label{notation:orthogonalProjection}
		If $V$ is an $\mathbb{R}$-vector subspace of $H$, then we denote the orthogonal projection from $H$ onto $V$ wrt.\ $\langle\cdot ,\cdot\rangle_H$ as $\mathcal{P}_V\colon H\to V$. Note that $\mathcal{P}_V$ necessarily exists and is unique, since $H$ is finite-dimensional.
	\end{notation}
	
	\begin{lemma}
		\label{lemma:lgsHasSolution}
		Let $D\in\mathbb{Z}_{\geq 1}$ and let $p_1,\dots ,p_D\in\mathbb{R}^m$. Then, for $g\in H$, the following hold:
		\begin{enumerate}[(i)]
			\item \label{lemma:lgsHasSolution_i}
			The linear system of equations
			\begin{align*}
				\left(K(p_i , p_j)\right)_{1\leq i,j\leq D}
				\cdot\eta
				=
				\begin{pmatrix}
					g(p_1) \\ \vdots \\ g(p_D)
				\end{pmatrix}
				,
				\text{ where }\eta \in\mathbb{R}^D
			\end{align*}
			has at least one solution.
			\item \label{lemma:lgsHasSolution_ii}
			While the solution $\eta$ to the linear system of equations in (\ref{lemma:lgsHasSolution_i}) is not unique in general, $\sum_{j=1}^{D} \eta_j K_{p_j}\in H$ is uniquely determined. More precisely, the set
			\begin{align*}
				\left\{\sum_{j=1}^{D} \eta_j K_{p_j}\Bigg\vert
				\eta\in\mathbb{R}^D\text{ and }
				\left(K(p_i , p_j)\right)_{1\leq i,j\leq D}
				\cdot\eta
				=
				\begin{pmatrix}
					g(p_1) \\ \vdots \\ g(p_D)
				\end{pmatrix}
				\right\}\subseteq H
			\end{align*}
			has exactly one element.
			\item \label{lemma:lgsHasSolution_iii}
			$\eta\in\mathbb{R}^D$ solves the linear system of equations in (\ref{lemma:lgsHasSolution_i}) if and only if
			\begin{align*}
				\sum_{j=1}^{D} \eta_j K_{p_j} = \mathcal{P}_{V}(g),
			\end{align*}
			where $V=\operatorname{span}_{\mathbb{R}}(\{K_{p_1},\dots ,K_{p_D}\})$ (see Notation \ref{notation:orthogonalProjection}).
			\item \label{lemma:lgsHasSolution_iv}
			If $p\in\mathbb{R}^m$ satisfies $K_p\in V$, then we have $(\mathcal{P}_V (g))(p) = g(p)$.
		\end{enumerate}
	\end{lemma}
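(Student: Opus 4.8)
The plan is to reduce everything to the reproducing property $h(x)=\langle h,K_x\rangle_H$ (established above) together with elementary facts about orthogonal projections in the finite-dimensional Hilbert space $H$. The first move is to recognize the coefficient matrix as a Gram matrix: since $K(p_i,p_j)=K_{p_j}(p_i)=\langle K_{p_i},K_{p_j}\rangle_H$, the matrix $\left(K(p_i,p_j)\right)_{1\leq i,j\leq D}$ is exactly the Gram matrix of the tuple $K_{p_1},\dots,K_{p_D}\in H$, and the right-hand side entries are $g(p_i)=\langle g,K_{p_i}\rangle_H$. I would then introduce the synthesis map $\Phi\colon\mathbb{R}^D\to H$, $\eta\mapsto\sum_{j=1}^{D}\eta_j K_{p_j}$, whose image is precisely $V=\operatorname{span}_{\mathbb{R}}(\{K_{p_1},\dots,K_{p_D}\})$; its adjoint is $\Phi^*\colon h\mapsto (h(p_j))_{j}$, so that $\Phi^*\Phi=\left(K(p_i,p_j)\right)_{1\leq i,j\leq D}$ and the system takes the form $\Phi^*\Phi\,\eta=\Phi^* g$.

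For part (\ref{lemma:lgsHasSolution_iii}), and simultaneously (\ref{lemma:lgsHasSolution_i}), I would compute $\langle\Phi(\eta),K_{p_i}\rangle_H=\sum_j\eta_j K(p_j,p_i)=(\Phi^*\Phi\,\eta)_i$ and $\langle g,K_{p_i}\rangle_H=g(p_i)$; hence $\eta$ solves the system if and only if $\langle\Phi(\eta)-g,K_{p_i}\rangle_H=0$ for every $i$, i.e.\ if and only if $\Phi(\eta)-g\perp V$. Since $\Phi(\eta)\in V$ by construction, this is exactly the characterizing property of the orthogonal projection, giving $\Phi(\eta)=\mathcal{P}_V(g)$, which is (\ref{lemma:lgsHasSolution_iii}). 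Existence in (\ref{lemma:lgsHasSolution_i}) is then immediate: $\mathcal{P}_V(g)\in V=\operatorname{im}\Phi$, so any preimage $\eta$ of $\mathcal{P}_V(g)$ under $\Phi$ solves the system. For (\ref{lemma:lgsHasSolution_ii}), note that (\ref{lemma:lgsHasSolution_iii}) forces $\sum_{j=1}^{D}\eta_j K_{p_j}=\Phi(\eta)=\mathcal{P}_V(g)$ for every solution $\eta$, so this element equals the single fixed vector $\mathcal{P}_V(g)$ regardless of which (generally non-unique) solution $\eta$ is chosen.

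Finally, (\ref{lemma:lgsHasSolution_iv}) follows in one line from the reproducing property and orthogonality: for $p$ with $K_p\in V$ we have $(\mathcal{P}_V(g))(p)=\langle\mathcal{P}_V(g),K_p\rangle_H=\langle g,K_p\rangle_H=g(p)$, where the middle equality uses $g-\mathcal{P}_V(g)\perp V\ni K_p$. I do not expect a genuine obstacle here; the only point requiring slight care is (\ref{lemma:lgsHasSolution_i}), where the matrix may be singular (precisely when the $K_{p_j}$ are linearly dependent), so one cannot simply invert. The clean way around this -- and the organizing idea of the whole proof -- is to route existence through the orthogonal projection rather than through invertibility, which is why I would establish (\ref{lemma:lgsHasSolution_iii}) first and read off (\ref{lemma:lgsHasSolution_i}) and (\ref{lemma:lgsHasSolution_ii}) from it.
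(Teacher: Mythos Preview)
Your proposal is correct and follows essentially the same approach as the paper: both reduce (\ref{lemma:lgsHasSolution_i}) and (\ref{lemma:lgsHasSolution_ii}) to (\ref{lemma:lgsHasSolution_iii}), prove (\ref{lemma:lgsHasSolution_iii}) by recognizing that the system $\sum_j K(p_i,p_j)\eta_j=g(p_i)$ is equivalent to $\sum_j\eta_j K_{p_j}-g\perp V$ via the reproducing property, and dispatch (\ref{lemma:lgsHasSolution_iv}) by the one-line orthogonality argument you gave. The only cosmetic difference is that you package the argument through the synthesis map $\Phi$ and its adjoint, whereas the paper writes out the inner products explicitly and handles the two implications in (\ref{lemma:lgsHasSolution_iii}) separately.
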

	\begin{proof}
		By existence and uniqueness of the orthogonal projection onto closed subspaces of Hilbert spaces, (\ref{lemma:lgsHasSolution_i}) and (\ref{lemma:lgsHasSolution_ii}) readily follow from (\ref{lemma:lgsHasSolution_iii}), so it suffices to show (\ref{lemma:lgsHasSolution_iii}) and (\ref{lemma:lgsHasSolution_iv}).
		
		We start with proving (\ref{lemma:lgsHasSolution_iii}). First assume that $\eta\in\mathbb{R}^D$ satisfies $\sum_{j=1}^{D} \eta_j K_{p_j} = \mathcal{P}_{V}(g)$. Since $g-\mathcal{P}_V (g)$ is orthogonal to $V$, we get
		\begin{align*}
			0
			& = \langle K_{p_i},  g-\mathcal{P}_V (g)\rangle_H\\
			& = \langle K_{p_i},  g\rangle_H -\left\langle K_{p_i},\sum_{j=1}^{D} \eta_j K_{p_j}\right\rangle_H\\
			& = g(p_i) - \sum_{j=1}^{D} K(p_i , p_j )\eta_j
		\end{align*}
		for all $i\in\{1,\dots ,D\}$. Hence $\eta$ solves the linear system of equations in (\ref{lemma:lgsHasSolution_i}), as desired.
		
		Turning our attention to the other implication, we now assume that $\eta\in\mathbb{R}^D$  solves the linear system of equations in (\ref{lemma:lgsHasSolution_i}). Pick $\tilde{\eta}\in\mathbb{R}^D$, such that $\sum_{j=1}^{D} \tilde{\eta}_j K_{p_j} = \mathcal{P}_{V}(g)$. But then, by the already proven first implication, we get
		\begin{align*}
			\left\langle K_{p_i},  \sum_{j=1}^{D} {\eta}_j K_{p_j}-\mathcal{P}_{V}(g)\right\rangle_H
			& = \left\langle K_{p_i},\sum_{j=1}^{D} \eta_j K_{p_j}\right\rangle_H - \left\langle K_{p_i},\sum_{j=1}^{D} \tilde{\eta}_j K_{p_j}\right\rangle_H\\
			& = \sum_{j=1}^{D} K(p_i , p_j )\eta_j - \sum_{j=1}^{D} K(p_i , p_j )\tilde{\eta}_j\\
			& = g(p_i) - g(p_i) = 0
		\end{align*}
		for all $i\in\{1,\dots ,D\}$. It follows that $\sum_{j=1}^{D} {\eta}_j K_{p_j}-\mathcal{P}_{V}(g)$ is orthogonal to $V$. But $\sum_{j=1}^{D} {\eta}_j K_{p_j}-\mathcal{P}_{V}(g)$ is itself contained in $V$. It follows that $\sum_{j=1}^{D} \eta_j K_{p_j} = \mathcal{P}_{V}(g)$, as desired.
		
		It remains to show (\ref{lemma:lgsHasSolution_iv}). Let $p\in\mathbb{R}^m$ with $K_p\in V$. Since $g-\mathcal{P}_V (g)$ is orthogonal to $V$, we get
		\begin{align*}
			0
			& = \langle K_p,  g-\mathcal{P}_V (g)\rangle_H\\
			& = g(p) - (\mathcal{P}_V (g))(p),
		\end{align*}
		and the claim follows.
	\end{proof}
	
		\begin{lemma}
		\label{lemma:fewer_tuples_enough_for_fourier}
		Let $d\in\{0,1,\dots ,m\}$ and let $\mathcal{I}_d$ be the set of all tuples in $\mathbb{R}^m$, for which at most $d$ entries are non-zero, i.e.,
		\begin{align*}
			\mathcal{I}_d = \left\{z\in\mathbb{R}^m\vert \operatorname{card}(\{j\in\{1,\dots ,m\}\vert z_j\neq 0\})\leq d\right\} .
		\end{align*}
		Then, for all $p\in (\frac{\pi}{2}S^m )\cap \mathcal{I}_d$, we have
		\begin{align*}
			K_p\in\operatorname{span}_{\mathbb{R}}\left(\left\{K_q\Big\vert q\in \left(\frac{\pi}{2}\{-1, 0, 1\}^m\right)\cap\mathcal{I}_d\right\}\right) .
		\end{align*}
	\end{lemma}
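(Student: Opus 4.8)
The plan is to use the product formula for the reproducing kernel established above, which lets me treat the problem one coordinate at a time. Writing $K_p(z) = \frac{1}{(2\pi)^m}\prod_{j=1}^m \kappa_{p_j}(z_j)$ with $\kappa_a(t) := 1 + 2\cos(a - t)$, the function $K_p$ is a product of single-variable factors, and I would expand each factor as a linear combination of the factors attached to the admissible angles $\{-\tfrac{\pi}{2}, 0, \tfrac{\pi}{2}\}$ before multiplying everything back out.

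First I would establish the single-variable identities. Since $\kappa_a(t) = 1 + 2\cos a\,\cos t + 2\sin a\,\sin t$, each $\kappa_a$ lies in the three-dimensional space spanned by $1$, $\cos t$, $\sin t$ and is determined by the coordinate vector $(1, 2\cos a, 2\sin a)$. For the angles $p_j \in \tfrac{\pi}{2}S = \{0, \tfrac{\pi}{2}, \pi, \tfrac{3\pi}{2}\}$ these vectors are $(1,2,0)$, $(1,0,2)$, $(1,-2,0)$, $(1,0,-2)$, while the admissible angles $\{-\tfrac{\pi}{2}, 0, \tfrac{\pi}{2}\}$ give $(1,0,-2)$, $(1,2,0)$, $(1,0,2)$. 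The angles $0, \tfrac{\pi}{2}$ are already admissible, and $\tfrac{3\pi}{2} \equiv -\tfrac{\pi}{2} \pmod{2\pi}$ gives $\kappa_{3\pi/2} = \kappa_{-\pi/2}$. The only genuinely nontrivial case is $a = \pi$, for which a one-line check yields
\begin{align*}
\kappa_\pi = -\kappa_0 + \kappa_{\pi/2} + \kappa_{-\pi/2}.
\end{align*}
Thus every $\kappa_{p_j}$ with $p_j \in \tfrac{\pi}{2}S$ is a linear combination of $\kappa_{q_j}$ with $q_j \in \{-\tfrac{\pi}{2}, 0, \tfrac{\pi}{2}\}$, and crucially this expansion involves a nonzero angle $q_j$ only when $p_j \neq 0$ (when $p_j = 0$ the expansion is just $\kappa_0$ itself).

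Finally I would reassemble the product: substituting these expansions into $K_p = \frac{1}{(2\pi)^m}\prod_j \kappa_{p_j}$ and distributing produces a linear combination of terms $\frac{1}{(2\pi)^m}\prod_j \kappa_{q_j} = K_q$ with $q \in \tfrac{\pi}{2}\{-1,0,1\}^m$. By the observation just made, $q_j \neq 0$ forces $p_j \neq 0$, so the support of each $q$ appearing is contained in the support of $p$; hence $q$ has at most as many nonzero entries as $p$, i.e.\ at most $d$, so $q \in \mathcal{I}_d$. Every $K_q$ occurring therefore belongs to the spanning set, which gives $K_p \in \operatorname{span}_{\mathbb{R}}(\{K_q \mid q \in (\tfrac{\pi}{2}\{-1,0,1\}^m) \cap \mathcal{I}_d\})$, as claimed. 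I do not anticipate a serious obstacle: the entire content is the single-variable relation at $a = \pi$ together with the support bookkeeping, and it is precisely this bookkeeping — rather than mere membership in the larger span over all of $\tfrac{\pi}{2}\{-1,0,1\}^m$ — that secures the $\mathcal{I}_d$ constraint.
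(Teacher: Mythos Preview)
Your proof is correct and follows essentially the same approach as the paper: the paper first uses periodicity to replace $\tfrac{3\pi}{2}$ by $-\tfrac{\pi}{2}$ and then, by induction on the number of entries equal to $\pi$, applies the identity $K_{(\ldots,\pi,\ldots)} = K_{(\ldots,\pi/2,\ldots)} + K_{(\ldots,-\pi/2,\ldots)} - K_{(\ldots,0,\ldots)}$, which is exactly your single-variable relation $\kappa_\pi = -\kappa_0 + \kappa_{\pi/2} + \kappa_{-\pi/2}$ lifted through the product formula. Your version simply expands all coordinates at once via the product structure instead of inductively, but the content and the support-preservation argument are identical.
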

	\begin{proof}
		Let $p\in (\frac{\pi}{2}S^m )\cap \mathcal{I}_d$. For $j\in\{1,\dots , m\}$, let
		\begin{align*}
			\tilde{p}_j =
			\begin{cases}
				-\frac{\pi}{2} & \text{if }p_j = 3\frac{\pi}{2},\\
				p_j & \text{otherwise}.
			\end{cases}
		\end{align*}
		
		and let $\tilde{p}=(\tilde{p}_1,\dots ,\tilde{p}_m)\in (\frac{\pi}{2}\{-1,0,1,2\}^m )\cap\mathcal{I}_d$. By direct computation exploiting periodicity one readily verifies that $K_p = K_{\tilde{p}}$, so we can consider $K_{\tilde{p}}$ instead of $K_p$. If $\tilde{p}_j\neq\pi$ for all $j\in\{1,\dots ,m\}$, then we are done, so assume we find some $i\in\{1,\dots ,m\}$ such that $\tilde{p}_i =\pi$. Again by direct computation we get
		\begin{align*}
			K_{\tilde{p}} = 
			& K_{(\tilde{p}_1,\dots ,\tilde{p}_{i-1} , \frac{\pi}{2}, \tilde{p}_{i+1},\dots ,\tilde{p}_m)}+K_{(\tilde{p}_1,\dots ,\tilde{p}_{i-1} , -\frac{\pi}{2}, \tilde{p}_{i+1},\dots ,\tilde{p}_m)}\\
			& -K_{(\tilde{p}_1,\dots ,\tilde{p}_{i-1} , 0, \tilde{p}_{i+1},\dots ,\tilde{p}_m)}
		\end{align*}
		The claim now follows by an inductive argument.
	\end{proof}
	\begin{remark}
		In \cite{Mari_2021}, a similar relationship between values at shifted arguments was described in the section dealing with arbitrary-order derivatives.
	\end{remark}
	
	\begin{lemma}
		\label{lemma:basis_for_H}
		The family $\left(K_q\right)_{q\in \frac{\pi}{2}\{-1, 0, 1\}^m}$ is a basis for $H$.
	\end{lemma}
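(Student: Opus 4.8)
The plan is to combine a dimension count with a direct linear-independence argument. First I would compute $\dim_{\mathbb{R}} H$. The characters $e^{i\omega^T z}$ for distinct $\omega\in\{-1,0,1\}^m$ are linearly independent, since the orthogonality relations $\int_{[-\pi,\pi]^m} e^{i(\omega-\mu)^T z}\,\mathrm{d}z = (2\pi)^m\delta_{\omega\mu}$ established above show they are pairwise orthogonal in $L^2$. The reality constraint $\overline{c_{-\omega}}=c_\omega$ then leaves one real degree of freedom for $\omega=0$ and two real degrees of freedom for each of the $(3^m-1)/2$ conjugate pairs $\{\omega,-\omega\}$ with $\omega\neq 0$, so $\dim_{\mathbb{R}} H = 1 + (3^m-1) = 3^m$. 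Since the grid $\frac{\pi}{2}\{-1,0,1\}^m$ also has exactly $3^m$ elements, in a space of dimension $3^m$ it suffices to prove that the family $\left(K_q\right)_{q\in\frac{\pi}{2}\{-1,0,1\}^m}$ is linearly independent; spanning and hence the basis property then follow automatically.

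For linear independence I would expand each $K_q$ in the Fourier modes. Writing $q=\frac{\pi}{2}s$ with $s\in\{-1,0,1\}^m$, the product formula (or the defining sum) for $K$ gives $K_q(z) = \frac{1}{(2\pi)^m}\sum_{\nu\in\{-1,0,1\}^m} e^{-i\frac{\pi}{2}\nu^T s}\,e^{i\nu^T z}$, so the coefficient of the mode $e^{i\nu^T z}$ in $K_q$ is $\frac{1}{(2\pi)^m}e^{-i\frac{\pi}{2}\nu^T s}$. Suppose $\sum_{s} \lambda_{s} K_{\frac{\pi}{2}s} = 0$ as a function. By the linear independence of the characters just noted, every Fourier coefficient must vanish, which yields the homogeneous system $\sum_{s\in\{-1,0,1\}^m}\lambda_{s}\,e^{-i\frac{\pi}{2}\nu^T s}=0$ for all $\nu\in\{-1,0,1\}^m$. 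Because $e^{-i\frac{\pi}{2}\nu^T s}=\prod_{j=1}^m e^{-i\frac{\pi}{2}\nu_j s_j}$, this system has a Kronecker-product structure: it reads $A^{\otimes m}\lambda = 0$, where $A=\bigl(e^{-i\frac{\pi}{2}\nu_j s_j}\bigr)_{\nu_j,s_j\in\{-1,0,1\}}$ is a single fixed $3\times 3$ matrix (the rows being a small discrete-Fourier-type transform of the three-point set $\{-1,0,1\}$).

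It then remains to check that $A$ is invertible, which reduces the whole question to one $3\times 3$ determinant. A direct computation gives $\det A = 4i\neq 0$, so $\det\!\left(A^{\otimes m}\right)=(\det A)^{m\,3^{m-1}}\neq 0$ and $A^{\otimes m}$ is invertible; hence $\lambda=0$, establishing linear independence and therefore the basis property. The conceptual heart of the proof is the observation that evaluation/interpolation on the $\frac{\pi}{2}$-grid factorizes as a tensor power of an invertible three-point transform, and I expect the only genuinely computational step — the main "obstacle," though a mild one — to be verifying $\det A\neq 0$ together with the bookkeeping that turns the vanishing Fourier coefficients into the Kronecker system $A^{\otimes m}\lambda=0$.
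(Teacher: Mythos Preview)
Your proof is correct, but it takes a genuinely different route from the paper's. Both arguments start from the same dimension count $\dim_{\mathbb{R}} H = 3^m$ (which you spell out in detail, whereas the paper simply states it), and both then prove only one of ``spanning'' or ``linearly independent''. The paper chooses spanning: it first exhibits the explicit reconstruction formula
\[
g = \left(\tfrac{\pi}{2}\right)^m \sum_{p\in\frac{\pi}{2}S^m} \langle g, K_p\rangle_H\, K_p
\]
over the larger $4^m$-point grid $\frac{\pi}{2}S^m$, and then invokes Lemma~\ref{lemma:fewer_tuples_enough_for_fourier} (with $d=m$) to pass from $\frac{\pi}{2}S^m$ down to $\frac{\pi}{2}\{-1,0,1\}^m$. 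You instead prove linear independence directly, by reading off Fourier coefficients and recognising the resulting $3^m\times 3^m$ system as the Kronecker power $A^{\otimes m}$ of a single $3\times 3$ matrix with $\det A = 4i \neq 0$. Your approach is fully self-contained and makes the tensor-product structure of the grid explicit, reducing everything to one $3\times 3$ determinant; the paper's approach, by contrast, reuses Lemma~\ref{lemma:fewer_tuples_enough_for_fourier} (needed elsewhere anyway) and yields an explicit expansion formula as a by-product.
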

	\begin{proof}
		Since $\dim_{\mathbb{R}}(H)=3^m$, it suffices to show that $\left\{K_q\vert q\in \frac{\pi}{2}\{-1, 0, 1\}^m\right\}$ spans $H$. Invoking Lemma \ref{lemma:fewer_tuples_enough_for_fourier} in the special case $d=m$, we further reduce this to showing that $\{K_p\vert p\in\frac{\pi}{2}S^m\}$ spans $H$. To this end, let $g\in H$. By direct computation one readily verifies that
		\begin{align*}
			g = \left(\frac{\pi}{2}\right)^m\sum_{p\in\frac{\pi}{2}S^m} \langle g, K_p\rangle_H K_p
		\end{align*}
		The claim follows.
	\end{proof}

	\begin{lemma}
		\label{lemma:basis_for_subspaces_of_H}
		Let $d\in\{0,1,\dots ,m\}$ and let $\mathcal{I}_d$ be as in Lemma \ref{lemma:fewer_tuples_enough_for_fourier}. Then the family $\left(K_q\right)_{q\in \left(\frac{\pi}{2}\{-1, 0, 1\}^m\right)\cap\mathcal{I}_d}$ is a basis for $\operatorname{span}_{\mathbb{R}}(\{K_p\vert p\in\mathcal{I}_d\})$.
	\end{lemma}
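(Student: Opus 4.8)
The plan is to verify the two defining properties of a basis separately: that the family $(K_q)_{q\in(\frac{\pi}{2}\{-1,0,1\}^m)\cap\mathcal{I}_d}$ is linearly independent, and that it spans $\operatorname{span}_{\mathbb{R}}(\{K_p : p\in\mathcal{I}_d\})$. Linear independence is immediate: by Lemma~\ref{lemma:basis_for_H}, the full family $(K_q)_{q\in\frac{\pi}{2}\{-1,0,1\}^m}$ is a basis of $H$ and hence linearly independent, and our family is simply the subfamily indexed by the subset $(\frac{\pi}{2}\{-1,0,1\}^m)\cap\mathcal{I}_d$, so it inherits linear independence. For the spanning claim, one inclusion is trivial, since $(\frac{\pi}{2}\{-1,0,1\}^m)\cap\mathcal{I}_d\subseteq\mathcal{I}_d$ means every $K_q$ in our family is in particular of the form $K_p$ with $p\in\mathcal{I}_d$. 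The substance is therefore the reverse inclusion: showing that $K_p$ lies in the span of the grid vectors for \emph{every} $p\in\mathcal{I}_d$, where $p$ may have arbitrary real entries on its support $J:=\{j : p_j\neq 0\}$, which has at most $d$ elements.

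For this I would combine the product formula for $K$ with a one-variable spanning observation. For a fixed coordinate $j$, the three single-variable factors arising from shifts $0,\frac{\pi}{2},-\frac{\pi}{2}$ are $1+2\cos z_j$, $1+2\sin z_j$, and $1-2\sin z_j$; a direct check (taking sums and differences of the latter two to extract $1$ and $\sin z_j$, then the first to extract $\cos z_j$) shows they span the three-dimensional space $\operatorname{span}_{\mathbb{R}}\{1,\cos z_j,\sin z_j\}$. Since $1+2\cos(p_j - z_j)=1+2\cos p_j\cos z_j + 2\sin p_j\sin z_j$ lies in this space, each factor $1+2\cos(p_j-z_j)$ with $j\in J$ can be written as a real linear combination of the three factors with shifts in $\{0,\frac{\pi}{2},-\frac{\pi}{2}\}$. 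Substituting these into $K_p(z)=\frac{1}{(2\pi)^m}\prod_{j\in J}(1+2\cos(p_j-z_j))\prod_{j\notin J}(1+2\cos z_j)$ and expanding the product by distributivity, every resulting summand equals $\frac{1}{(2\pi)^m}\prod_{j}(1+2\cos(q_j - z_j))=K_q$ for some $q$ with $q_j\in\{0,\frac{\pi}{2},-\frac{\pi}{2}\}$ for $j\in J$ and $q_j=0$ for $j\notin J$. Each such $q$ lies in $\frac{\pi}{2}\{-1,0,1\}^m$ and has support contained in $J$, so $|\operatorname{supp}(q)|\leq|J|\leq d$ and hence $q\in(\frac{\pi}{2}\{-1,0,1\}^m)\cap\mathcal{I}_d$. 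This exhibits $K_p$ as a linear combination of the grid vectors and completes the reverse inclusion.

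Combining the two parts, the family is a linearly independent spanning set of $\operatorname{span}_{\mathbb{R}}(\{K_p : p\in\mathcal{I}_d\})$, i.e., a basis. I expect the only genuine content to be the spanning direction, and within it the point to get right is the bookkeeping of supports: because the off-support coordinates of $p$ already carry the shift-$0$ factor $1+2\cos z_j$, the distributive expansion never enlarges the support beyond $J$, which is exactly what keeps every relevant $q$ inside $\mathcal{I}_d$. The one-variable spanning computation and the distributive expansion are routine, so no serious obstacle remains.
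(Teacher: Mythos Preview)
Your proof is correct, and the linear-independence step is identical to the paper's. The spanning argument, however, takes a genuinely different route. The paper first invokes Lemma~\ref{lemma:fewer_tuples_enough_for_fourier} to reduce from shifts in $\{-1,0,1\}$ to shifts in $S=\{0,1,2,3\}$, then fixes a $d$-dimensional coordinate subspace $\mathcal{S}_{p,d}$ containing $p$ and verifies the discrete reproducing identity
\[
K_p=\frac{(2\pi)^m}{4^d\cdot 3^{m-d}}\sum_{q\in(\frac{\pi}{2}S^m)\cap\mathcal{S}_{p,d}}\langle K_p,K_q\rangle_H\,K_q,
\]
which exhibits $K_p$ as a combination of $4^d$ grid kernels. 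You instead work directly with the product formula for $K$, reduce to the one-variable fact that $\{1+2\cos z,\,1\pm 2\sin z\}$ spans $\operatorname{span}\{1,\cos z,\sin z\}$, and expand the product by distributivity; this yields $K_p$ as a combination of at most $3^{|J|}$ kernels $K_q$ with $q\in(\frac{\pi}{2}\{-1,0,1\}^m)\cap\mathcal{I}_d$, bypassing both Lemma~\ref{lemma:fewer_tuples_enough_for_fourier} and the inner-product identity. Your argument is more elementary and makes the tensor structure of $K$ do all the work; the paper's argument is closer in spirit to the RKHS machinery used elsewhere and gives an explicit coefficient formula, at the cost of a heavier computation and an auxiliary lemma.
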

	\begin{proof}
		Linear independence follows from Lemma \ref{lemma:basis_for_H}. Let $p\in\mathcal{I}_d$. It suffices to prove that $K_p$ is contained in $\operatorname{span}_{\mathbb{R}}(\{K_q\vert q\in \left(\frac{\pi}{2}\{-1, 0, 1\}^m\right)\cap\mathcal{I}_d\})$. Invoking Lemma \ref{lemma:fewer_tuples_enough_for_fourier}, we further reduce this to showing that $K_p$ is contained in $\operatorname{span}_{\mathbb{R}}(\{K_q\vert q\in \left(\frac{\pi}{2} S^m\right)\cap\mathcal{I}_d\})$. Since $p\in\mathcal{I}_d$, we find a (not necessarily uniquely determined) set $I_{p,d}\subseteq\{1,\dots ,m\}$ with cardinality $\operatorname{card}(I_{p,d})=d$, such that $\{j\in\{1,\dots ,m\}\vert p_j\neq 0\}\subseteq I_{p,d}$. We then define 
		\begin{align*}
			\mathcal{S}_{p,d} =\{z\in\mathbb{R}^m\vert z_j =0\text{ for all } j\in\{1,\dots ,m\}\setminus I_{p,d}\} .
		\end{align*}
		Since $\mathcal{S}_{p,d}\subseteq \mathcal{I}_d$, it suffices to prove that $K_p$ is contained in $\operatorname{span}_{\mathbb{R}}(\{K_q\vert q\in \left(\frac{\pi}{2} S^m\right)\cap\mathcal{S}_{p,d}\})$. By direct computation one readily verifies that
		\begin{align*}
			K_p
			=
			\frac{(2\pi )^m}{4^d\cdot 3^{m-d}}
			\sum_{q\in \left(\frac{\pi}{2} S^m\right)\cap\mathcal{S}_{p,d}} \langle K_p ,K_q\rangle_H K_q
			.
		\end{align*}
		The claim follows.
	\end{proof}
	
	\section{Proof of Theorem \ref{thm:performance_guarantees_taylor}}
	\label{appendix:proof_taylor}
	
	We now provide a proof for Theorem \ref{thm:performance_guarantees_taylor}. To this end, adopt the notation from Section \ref{sec:algorithms} and Theorem \ref{thm:performance_guarantees_taylor}.
	
	We first prove (\ref{thm:performance_guarantees_taylor_number_f_eval}), so assume $L\leq m$. In accordance with Lemma \ref{lemma:partials_for_our_function_f}, the number of evaluations of $f$ is clearly bounded from above by
	\begin{align*}
		\sum_{\alpha\in (\mathbb{Z}_{\geq 0})^m\text{ with }|\alpha |\leq L} 2^{|\alpha |}
		& \leq \sum_{\alpha\in (\mathbb{Z}_{\geq 0})^m\text{ with }|\alpha |\leq L} 2^L\\
		& = 2^L\cdot \operatorname{card}\left(\left\{\alpha\in (\mathbb{Z}_{\geq 0})^m\colon |\alpha |\leq L\right\}\right)\\
		& = 2^L\cdot \binom{m+L}{L}\\
		& = 2^L\cdot\frac{(m+1)\cdots (m+L)}{L!}\\
		& \leq 2^L\cdot\frac{(2m)^L}{L!}\\
		& = \frac{4^L}{L!}m^L
		,
	\end{align*}
	as desired.

	We now turn our attention to (\ref{thm:performance_guarantees_taylor_error_estimate}). We calculate for all $\theta\in\mathbb{R}^m$:
	\begin{align*}
		|f(\theta)|
		& = \left| \langle0^n|U^\dag (\theta)\mathcal{M}U(\theta)|0^n\rangle \right| \\
		& = \left|\sum_{(P_1 ,\dots ,P_n)\in\{I,X,Y,Z\}^n} a_{(P_1 ,\dots ,P_n)}\langle0^n|U^\dag (\theta)(P_1\otimes\cdots\otimes P_n )U(\theta)|0^n\rangle \right|\\
		& \leq \sum_{(P_1 ,\dots ,P_n)\in\{I,X,Y,Z\}^n} \left| a_{(P_1 ,\dots ,P_n)}\right| \left| \langle0^n|U^\dag (\theta)(P_1\otimes\cdots\otimes P_n )U(\theta)|0^n\rangle \right|  
		.
	\end{align*}
	Noting that $\left| \langle0^n|U^\dag (\theta)U(\theta)|0^n\rangle \right|=1$ and that the eigenvalues of the Hermitian matrix $P_1\otimes\cdots\otimes P_n$ have absolute value $1$ for all $(P_1 ,\dots ,P_n)\in\{I,X,Y,Z\}^n$, we then get
	\begin{align*}
		\Vert f\Vert_\infty
		\leq \sum_{(P_1 ,\dots ,P_n)\in\{I,X,Y,Z\}^n} \left| a_{(P_1 ,\dots ,P_n)}\right|
		.
	\end{align*}
	Since we clearly have $f\in H$, the desired estimates now follow from Lemma \ref{lemma:taylorerror}. This concludes the proof of Theorem \ref{thm:performance_guarantees_taylor}.
	
	\section{Proof of Theorem \ref{thm:performance_guarantees_fourier}}
	\label{appendix:proof_fourier}
	We now provide a proof for Theorem \ref{thm:performance_guarantees_fourier}. To this end, adopt the notation from Section \ref{sec:algorithms} and Theorem \ref{thm:performance_guarantees_fourier}. In particular, $\{p_1 ,\dots , p_D\} = \left(\frac{\pi}{2}\{-1, 0, 1\}^m\right)\cap\mathcal{I}_L$ and $D$ is the cardinality of this set.
	
	We first prove (\ref{thm:performance_guarantees_fourier_number_f_eval}). For any tuple $z$ in the set $\left(\frac{\pi}{2}\{-1, 0, 1\}^m\right)\cap\mathcal{I}_L$, there exists a (not necessarily uniquely determined) subset $I_z$ of $\{1,\dots ,m\}$ with cardinality $\operatorname{card}(I_z) = L$, such that $\{j\in\{1,\dots ,m\}\vert z_j\neq 0\}\subseteq I_z$. Any entry $z_j$ for $j\in I_z$ is contained in $\{-\frac{\pi}{2}, 0, \frac{\pi}{2}\}$, and all other entries are $0$. So, since there are precisely $\binom{m}{L}$ subsets of $\{1,\dots ,m\}$ with cardinality $L$, the number of evaluations of $f$ is clearly bounded from above by
	\begin{align*}
		3^L\cdot\binom{m}{L}\leq \frac{3^L}{L!}m^L
		.
	\end{align*}

	Since $f\in H$, (\ref{thm:performance_guarantees_fourier_projection}) is obvious from (\ref{lemma:lgsHasSolution_iii}) in Lemma \ref{lemma:lgsHasSolution}, so we move on to proving (\ref{thm:performance_guarantees_fourier_minimal_norm}). To this end, let
	\begin{align*}
		V:=\operatorname{span}_{\mathbb{R}}(\{K_{p_1},\dots ,K_{p_D}\})
		=
		\operatorname{span}_{\mathbb{R}}\left(\left\{K_q\Big\vert q\in \left(\frac{\pi}{2}\{-1, 0, 1\}^m\right)\cap\mathcal{I}_L\right\}\right).
	\end{align*}
	In particular, we have $\tilde{f}=\mathcal{P}_V (f)$ and $\tilde{f}$ agrees with $f$ on $\left(\frac{\pi}{2}\{-1, 0, 1\}^m\right)\cap\mathcal{I}_L$ by (\ref{lemma:lgsHasSolution_iv}) in Lemma \ref{lemma:lgsHasSolution}. Now assume that $h\in H$ agrees with $f$ on $\left(\frac{\pi}{2}\{-1, 0, 1\}^m\right)\cap\mathcal{I}_L$. But then, by invoking Lemma \ref{lemma:lgsHasSolution} yet again, we get that $\mathcal{P}_V (h)=\tilde{f}$. This implies that $\Vert \tilde{f}\Vert_H \leq \Vert h\Vert_H$ with equality if and only if $h=\tilde{f}$. The claim follows.
	
	We now turn our attention towards (\ref{thm:performance_guarantees_fourier_coincides_on_I_L}). Let $p\in\mathcal{I}_L$. Applying Lemma \ref{lemma:basis_for_subspaces_of_H} with $d=L$ we get
	\begin{align*}
		K_p\in \operatorname{span}_{\mathbb{R}}\left(\left\{K_q\Big\vert q\in \left(\frac{\pi}{2}\{-1, 0, 1\}^m\right)\cap\mathcal{I}_L\right\}\right)=V.
	\end{align*}
	By (\ref{lemma:lgsHasSolution_iv}) in Lemma \ref{lemma:lgsHasSolution} we then get $\tilde{f}(p)=(\mathcal{P}_V (f))(p)=f(p)$, as desired. The claim follows.
	
	(\ref{thm:performance_guarantees_fourier_coordinate_axes}) and (\ref{thm:performance_guarantees_fourier_whole_space}) readily follow from (\ref{thm:performance_guarantees_fourier_coincides_on_I_L}), so it remains to prove (\ref{thm:performance_guarantees_fourier_error_estimate}). To this end, let $\alpha\in (\mathbb{Z}_{\geq 0})^m$ with $|\alpha |\leq L$. By Lemma \ref{lemma:parametershift}, we have
	\begin{align*}
		D^\alpha (f-\tilde{f}) (0) & = \frac{1}{2^{|\alpha |}} \sum_{\mathfrak{i}\in\{-1,1\}^{|\alpha |}} \mathfrak{i}^{(1,\dots ,1)} (f(p_{\alpha ,\mathfrak{i}}) - \tilde{f}(p_{\alpha ,\mathfrak{i}})),
	\end{align*}
	since $\tilde{f}\in H$ (see Notation \ref{notation:shiftvectors}). Let $\mathfrak{i}\in\{-1,1\}^{|\alpha |}$. It suffices to show that $f(p_{\alpha ,\mathfrak{i}}) = \tilde{f}(p_{\alpha ,\mathfrak{i}})$. But, since $|\alpha |\leq L$, we have $p_{\alpha ,\mathfrak{i}}\in\mathcal{I}_L$. The claim now follows, since $f$ and $\tilde{f}$ coincide on $\mathcal{I}_L$.
	
	\end{appendices}

\end{document}